\newcommand{\ddr}{d}
\newcommand{\edr}{\mathrm{e}}
\newcommand{\E}{\mathbb{E}}
\renewcommand{\P}{\mathbb{P}}
\newcommand{\X}{\mathbb{X}}
\def\I{\mbox{I}}
\newcommand{\R}{\mathbb{R}}
\newcommand{\N}{\mathds{N}}
\newcommand{\Ga}{\Gamma}
\newcommand{\Bcr}{\mathcal{B}}
\newcommand{\Xcr}{\mathcal{X}}
\def\N{\mbox{N}}
\numberwithin{equation}{section}
\theoremstyle{plain}
\newtheorem{thm}{Theorem}[section]
\newtheorem{coro}{Corollary}[section]
\theoremstyle{definition}
\newtheorem{dfn}{Definition}
\theoremstyle{remark}
\newtheorem*{rmk}{Remark}
\begin{document}

\title{Compound random measures and their use in Bayesian nonparametrics}

\author{Jim E. Griffin
and Fabrizio Leisen\footnote{{\it Corresponding author}: Jim E. Griffin, School of Mathematics, Statistics and Actuarial Science, University of Kent, Canterbury CT2 7NF, U.K. {\it Email}: 
jeg28@kent.ac.uk}\\
University of Kent}

\date{}



\maketitle

\abstract{
A new class of dependent random measures which we call {\it compound random measures} are proposed and the use of normalized versions of these random measures as priors in Bayesian nonparametric mixture models is considered. Their tractability allows the properties of both compound random measures and normalized compound random measures to be derived. In particular, we  show how compound random measures can be constructed with gamma, $\sigma$-stable and generalized gamma process marginals. We also derive several forms of the Laplace exponent and characterize dependence through both the L\'evy copula and correlation function. A slice sampler and an augmented P\'olya urn scheme sampler are described for posterior inference  when a normalized compound random measure is used as the mixing measure in a nonparametric mixture model and a data example is discussed.}
\\
\noindent\textbf{Keyword}: Dependent random measures; L\'evy Copula; Slice sampler; Mixture models; Multivariate L\'evy measures; Partial exchangeability.



\section{Introduction}

Bayesian nonparametric mixtures have become a standard tool for inference when a distribution of either observable or unobservable quantities is considered unknown. A more challenging problem, which arises in many applications, is to define a prior for a collection of related unknown distributions. For example, \cite{mulros04} consider informing the analysis of a study with results from previous related studies. They considered the CALGB 9160 \citep{CALGB9160} clinical study which  looked at the response over time of patients to different anticancer drug therapies.  \cite{mulros04} suggested improving the precision of their inference using the results of the related  study
CALGB 8881\citep{CALGB8881}.
Figure~\ref{f:data1} shows bivariate plots of two subject-specific regression parameters ($\beta_0$ and $\beta_1$) for the two studies.
\begin{figure}[h!]
\begin{center}
\includegraphics[trim=0mm 0mm 120mm 230mm, clip]{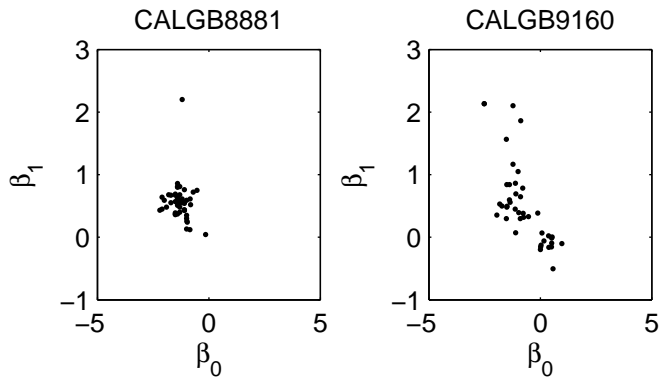}
\end{center}
\caption{Scatter-plots of the subject-specific regression parameters  $\beta_0$ and $\beta_1$ for the groups in CALGB 8881 and CALGB 9160.}\label{f:data1}
\end{figure}
 The graphs suggest differences between the joint distribution of $\beta_0$ and $\beta_1$ which should be included in any analysis which combines these data sets. The results for CALGB9160 also suggest that a nonparametric model is needed to fully describe the shape of the density. A natural Bayesian approach would assume different distributions for each study but construct a dependent prior for  these distributions. 

In general, suppose that $x\in\mathcal{X}$ denotes the value of covariates then, in a Bayesian nonparametric analysis, a prior needs to be defined across a collection of correlated distributions $\{\tilde{p}_x\vert x\in\mathcal{X}\}$. This problem was initially studied in a seminal paper on dependent Dirichlet processes \citep{MacEachern} where generalisations of the Dirichlet process were proposed. Subsequent work used stick-breaking constructions of random measures as a basis for defining such a prior. This work is reviewed by \cite{dun10}. These priors can usually be represented as
\begin{equation}
\tilde{p}_x=\sum_{i=1}^{\infty} w_i(x)\delta_{\theta_i(x)}
\label{DDP}
\end{equation}
where 
$w_1(x),w_2(x),\dots$ follow a stick-breaking process for all $x\in\mathcal{X}$. 
A drawback with this approach is the stochastic ordering of the $w_i(x)$'s for any $x\in\mathcal{A}$ which can lead to strange effects in the prior as $x$ varies.

If $\mathcal{A}$ is countable,  several other approaches to defining a prior on a collection of random probability measures have been proposed. The 
Hierarchical Dirichlet process (HDP) \citep{tehjor06} assumes that $\tilde{p}_x$ are {\it a priori} conditionally  independent and identically distributed according to a Dirichlet process whose centring measure is itself given a Dirichlet process prior. This construction induces correlation between the elements of $\{\tilde{p}_x\vert x\in\mathcal{A}\}$ in the same way as in parametric hierarchical models. This construction can be extended to  more general hierarchical frameworks 
\cite[see {\it e.g.}][for a review]{tehjord}. Alternatively, a prior can be defined using the idea of normalized random measures with independent increments which are  defined by normalising a completely random measure. The prior is defined on a collection of correlated completely random measures $\{\tilde\mu_x\vert x\in\mathcal{A}\}$  which are then normalized for each  of $x$, {\it i.e.} $\tilde{p}_x=\tilde\mu_x/\tilde\mu_x(\mathbb{X})$ where $\mathbb{X}$ is the support of $\tilde\mu_x$. Several specific constructions have been proposed including various forms of superposition \citep{GKS13, NL14,NLP14a,NLP14b, Chen13, BaCaLe},  the kernel-weighted completely random measures \citep{fotwil12, Gri11} and L\'evy copula-based approaches \citep{LL, LLS, ZL}. In this paper, we develop an alternative method for constructing correlated completely random measures which is tractable, whose properties can be derived and for which sampling methods  for posterior inference without truncation can be developed. The construction also provides a unifying framework for previously proposed constructions. Indeed, the $\sigma$-stable and gamma vector of dependent random measures, studied in the recent works of \cite{LL},  \cite{LLS} and \cite{ZL} are special cases. Although these papers derive useful  theoretical results, their application has been limited by the lack of a sampling methods for   posterior inference. The algorithms proposed in this paper can also be used for posterior sampling for these nonparametric priors which is another contribution of the paper. 

The paper is organized as follows. Section 2 introduces the concepts of completely random measures, normalized random measures and their multivariate extensions. Section 3 discusses the construction and some properties of a new class of multivariate L\'evy process, \emph{Compound Random Measures}, defined by a {\it score distribution} and a {\it directing L\'evy process}. Section 4 provides a detailed description of Compound Random Measures with a gamma score distribution. Section 5 considers the use of normalized version of Compound Random Measures in nonparametric mixture models including the description of a Markov chain Monte Carlo scheme for inference. Section 6  provides an illustration of the use of these methods in an example and Section 7 concludes.

\section{Preliminaries}
Let $(\Omega,\mathcal{F},\P)$ be a
probability space and $(\X,\mathcal{X})$ a measure space, with $\X$ 
Polish and $\mathcal{X}$ the Borel $\sigma$--algebra of subsets of $\X$. Denote by $\mathbb{M}_{\mathbb{X}}$ the space of boundedly finite measures on $(\X,\mathcal{X})$, {\it i.e.} this means that for any $\mu$ in $\mathbb{M}_{\mathbb{X}}$ and any bounded set $A$ in $\mathcal{X}$ one has $\mu(A)<\infty$. Moreover, $\cal{M}_{\mathbb{X}}$ stands for the corresponding Borel $\sigma$--algebra, see \cite{daley} for technical details. 
The concept of a \textit{completely random measure} was introduced by
\cite{Kingman67}.
\begin{dfn}
Let $\tilde{\mu}$ be a measurable mapping from $(\Omega,\mathcal{F},\P)$ into ($\mathbb{M}_{\mathbb{X}}$,$\cal{M}_{\mathbb{X}}$) and such that for any $A_1,\dots,A_n$ in $\mathcal{X}$, with $A_i\cap A_j=\emptyset$ for any $i\neq j$, the random variables $\tilde{\mu}(A_1), \dots, \tilde{\mu}(A_n)$ are mutually independent. Then $\tilde{\mu}$ is called a \textit{completely random measure} (CRM). 
\end{dfn}
\noindent A CRM can always be represented as a sum of two components: 
$$\tilde{\mu}=\tilde{\mu}_c+\sum_{i=1}^{M}V_i\delta_{x_i}$$
where the fixed jump points $x_1,\dots, x_M$ are in $\mathbb{X}$ and the non-negative random jumps $V_1,\dots,V_M$ are both mutually independent and independent from $\tilde{\mu}_c$. The latter is a completely random measure such that
$$\tilde{\mu}_c=\sum_{i=1}^{\infty} J_i\delta_{X_i}$$
where both the positive jump heights $J_i$'s and the $\mathbb{X}$-valued jump locations $X_i$'s are random. The measure $\tilde{\mu}_c$ is characterized by the \textit{L\'evy-Khintchine} representation which states that
$$\E\left[e^{-\int_X f(x)\tilde{\mu}_c(dx)}\right]
=e^{-\int_0^{\infty}\int_X [1-e^{-sf(x)}]\bar\nu(\ddr s,\ddr x)}
$$
where $f:\mathbb{X}\rightarrow\mathbb{R}^{+}$ is a measurable function such that $\int f\tilde{\mu}_c<\infty$ almost surely and $\bar\nu$ is a measure on $\mathbb{R}^{+}\times\mathbb{X}$ such that
$$
\int_{\mathbb{R}^{+}}\int_B \min\{1,s\}\bar\nu(\ddr s,\ddr x)<\infty$$
for any $B$ in $\mathcal{X}$. The measure $\bar\nu$ is usually called the L\'evy intensity of $\tilde{\mu}_c$. Throughout the paper, we will consider completely random measures without the fixed jump component ({\it i.e.} $M=0$).
For our purposes, we will focus on the \textit{homogeneous} case, {\it i.e.} L\'evy intensities where the height and location contributions are separated. Formally,
$$ \bar\nu(\ddr s,\ddr x)=\rho(\ddr s)\alpha(\ddr x)$$
where  $\rho$ is a measure on $\mathbb{R}^+$ and
$\alpha$ is a non-atomic measure on $\mathbb{X}$, which is usually called the {\it centring measure}. Some famous examples are  the \textit{Gamma} process,
$$\bar\nu(\ddr s,\ddr x)=s^{-1}e^{-s}\ddr s\,\alpha(\ddr x),$$
 the \textit{$\sigma$-stable} process,
$$\bar\nu(\ddr s,\ddr x)=\frac{\sigma}{\Gamma(1-\sigma)}s^{-1-\sigma} \ddr s\,\alpha(\ddr x),\quad 0<\sigma<1,$$
and the homogeneous \textit{Beta} process,
\[
\bar\nu(\ddr s,\ddr x)= \theta s^{-1}(1-s)^{\theta-1}\ddr s\,\alpha(\ddr x),\quad 0<s<1,\quad \theta>0.
\]
A general class of processes that includes the gamma and $\sigma$-stable process is the \textit{Generalized Gamma} process,
$$\bar\nu(\ddr s,\ddr x)=\frac{\sigma}{\Gamma(1-\sigma)}s^{-1-\sigma} e^{-as}\ddr s\,\alpha(\ddr x),\quad 0<\sigma<1,\quad a>0$$
Random measures are the basis for building Bayesian nonparametric priors. 
\begin{dfn}
Let $\tilde{\mu}$ be a measure in ($\mathbb{M}_{\mathbb{X}}$,$\cal{M}_{\mathbb{X}}$). A \textit{Normalized Random Measure} (NRM) is defined as $\tilde{p}=\frac{\tilde{\mu}}{\tilde{\mu}(\mathbb{X})}$.
\end{dfn}
\noindent The definition of a normalized random measure is very general and does not require that the underlying measure is completely random. The Pitman-Yor process (see \cite{PY97}) is
a well-known example of a Bayesian nonparametric priors which cannot be derived by normalizing a completely random measure. In this particular case, the unnormalized measure is obtained through a change of measure of a $\sigma$-stable process.  
However, many common Bayesian nonparametric priors can be defined as a normalization of a CRM and many other processes can be derived by normalising processes derived from CRMs, see \cite{rlp}. For instance, 
it can be shown that the \textit{Dirichlet Process}, introduced by \cite{ferg}, is a normalized gamma process. Throughout the paper, we will assume that the underlying measure is a CRM and use the acronym NMRI (Normalized Random Measures with independent increments) to emphasize the independence of a CRM on disjoint intervals.

Although nonparametric priors based on normalization are extremely flexible, in many
real applications data arise under different conditions and hence
assuming a single prior can be too restrictive. For example,
using covariates, data may be divided into different units. In this
case, one would like to consider different distributions for different
units instead of a single common distribution for all the units. In these situations, it is more reasonable to consider vectors of dependent random probability measures.  
   
\subsection{Vectors of normalized random measures} Suppose $\tilde\mu_1,\dots, \tilde\mu_d$ are homogeneous CRMs on $(\X,\mathcal{X})$
 with respective marginal L\'evy intensities
\begin{equation}\label{Homo}
\bar{\nu}_j(\ddr s, \ddr x)=\nu_j(\ddr s)
\,\alpha(\ddr x),
\qquad \quad j=1,\dots,d.
\end{equation}
where  $\nu_j$ is a measure on $\mathbb{R}^+$ and
$\alpha$ is a non-atomic measure on $\mathbb{X}$. Furthermore, $\tilde\mu_1, \dots, \tilde\mu_d$ are dependent and the random vector
$(\tilde\mu_1,\dots,\tilde\mu_d)$ has independent increments, in the sense that for any $A_1,\dots,A_n$ in $\mathcal{X}$, with $A_i\cap A_j=\emptyset$ for any $i\neq j$, the random vectors $(\tilde\mu_1(A_i),\dots,\tilde\mu_d(A_i))$ and
$(\tilde\mu_1(A_j),\dots, \tilde\mu_d(A_j))$ are independent. This implies that
for any set of measurable functions $\bm{f}=(f_1,\dots,f_d)$ such that $f_j:\X\to\mathbb{R}^{+}$, $j=1,\dots,d$ and $\int |f_j|\,d\tilde{\mu}_j<\infty$, one has a multivariate analogue of the L\'evy-Khintchine representation (see \cite{sato}, \cite{daley} and \cite{epifani})
    \begin{equation}
    \label{eq:biLaplace}
    \E\left[e^{-\tilde\mu_1(f_1)-\cdots-\tilde\mu_d(f_d)}\right]=
    \exp\left\{-\psi^*_{\rho,d}(\bm{f})\right\}
    \end{equation}
where $\tilde\mu_j(f_j)=\int f_j\,d\tilde{\mu}_j$,
\begin{equation}
\label{ExpLevy}
\psi^*_{\rho,d}(\bm{f})=\int_\X\int_{(0,\infty)^d} 
      \left[1-\edr^{-s_{_1} f_1(x)-\cdots-s_{d} f_d(x)}\right]\;\rho_d(\ddr s_1,\dots,\ddr s_d)
    \:\alpha(\ddr x)
\end{equation}    
    and
\begin{equation}
    \label{eq:marginals1}
    \int_{(0,\infty)^{d-1}}\rho_d(\ddr s_1,\dots,ds_{j-1},A,ds_{j+1},\dots,ds_d)=
        \int_A \nu_j(ds).
    \end{equation}

The representation (\ref{Homo}) implies that the jump heights of $(\tilde\mu_1,\dots,\tilde\mu_d)$ are independent from the jump locations. Moreover, these jump locations are common to all the CRMs and are governed by $\alpha$. It is worth noting that, since $(\tilde\mu_1,\ldots,\tilde\mu_d)$ has independent increments, its distribution is characterized by a choice of $f_1,\ldots,f_d$ in \eqref{eq:biLaplace} such that $f_j=\lambda_j\,1_{A}$ for any set $A$ in $\Xcr$, $\lambda_j\in\R^+$ and $j=1,\ldots,d$. In this case
\[
\psi^*_{\rho,d}(\bm{f})=\alpha(A)\,\psi_{\rho,d}(\bm{\lambda})
\]
where $\bm{\lambda}=(\lambda_1,\ldots,\lambda_d)$ and
\begin{equation}
\psi_{\rho,d}(\bm{\lambda})=\int_{(\R^+)^d}
\left[1-\edr^{-\langle\bm{\lambda},\bm{s}\rangle}
\right]\rho_d(ds_1,\dots,ds_d)
\label{eq:lapl_exp_star}
\end{equation} 
where $\bm{s}=(s_1,\ldots,s_d)$ and $\langle\bm{\lambda},\bm{s}\rangle=\sum_{j=1}^d \lambda_j s_j $.

We close the section with the definition of \textit{vectors of normalized random measures with independent increments}. 
\begin{dfn}\label{DefVec}
Let ($\tilde\mu_1,\dots, \tilde\mu_d$) be a vector of CRMs on $\mathbb{X}$ and let $\tilde{p}_j=\frac{\mu_j}{\mu_j(\mathbb{X})}$, $j=1,\dots,d$. The vector 
\begin{equation}
\label{vettore}
\tilde{p}=(\tilde
p_1,\dots,\tilde p_d)
\end{equation} 
is called a \textit{vector of dependent normalized random measures with independent increments} on
$(\X,\Xcr)$. 
 \end{dfn}

\section{Compound Random Measures}
In this section, we will define a general class of vectors of NRMI that incorporates  
 many recently proposed priors built using normalization, see  for instance \cite{LL}, \cite{LLS}, \cite{ZL}, \cite{GKS13} and \cite{NLP14a}. Before introducing the formal definition of Compound Random Measures, we want to provide an intuitive illustration of the model. Consider the following dependent random probability measures:
$$\tilde{p}_1=\sum_{i\geq 1} \pi_{1,i}\delta_{X_i}, \,\dots, \,\tilde{p}_d=\sum_{i\geq 1} \pi_{d,i}\delta_{X_i},$$ 
where
\begin{equation}\label{Corm:reviewer}
\pi_{j,i}=\frac{m_{j,i} J_i}{\sum_l m_{j,l} J_l}. 
\end{equation}
The $m_{j,i}$'s are perturbation coefficients that identify specific features of the $j$-th random measure and they are independent and identically distributed across the random measures. The shared jumps $(J_i)_{i\geq 1}$ lead to  dependence among the $\tilde{p}_j$. In the next section, we will provide a formal definition of Compound random measures in terms of its multivariate L\'evy intensity. 

\subsection{Definition} Let $(\tilde\mu_1,\dots, \tilde\mu_d)$ be a vector of homogeneous CRMs on $\mathbb{X}$, {\it i.e.} the L\'evy intensity $\nu_j$ of the measure $\tilde{\mu}_j$ is
$$ \bar\nu_j(ds,dx)=\nu_j(ds)\,\alpha(dx),\quad j=1,\dots,d.$$
Following the notation in Eq. \eqref{ExpLevy}, we want to define a $\rho_d$ such 
\begin{equation}
    \label{eq:marginals}
    \int_{(0,\infty)^{d-1}}\rho_d(\ddr s_1,\dots,ds_{j-1},A,ds_{j+1},\dots,ds_d)=
        \int_A \nu_j(ds)
    \end{equation}
     for any $j=1,\dots,d$. In this setting we can define a compound random measure. 
\begin{dfn}
A  \textit{Compound random measure} (CoRM) is a vector of CRMs defined by a 
 \emph{score distribution} $h$ and a
 \emph{directing L\'evy process} with intensity $\nu^*$ such that
\begin{equation}\label{compound}
\rho_d(ds_1,\dots,ds_d)=\int  h(s_1,\dots,s_d\vert z)\,ds_1\cdots ds_d\,\nu^{\star}(dz)
\end{equation}
where $h(\cdot\vert z)$ is the probability mass function or probability density function of the score distribution with parameters $z$ and $\nu^{\star}$ is the L\'evy intensity of the directing L\'evy process  which satisfies the condition
$$\int\int\min(1,\parallel\bm{s}\parallel) 
h(s_1,\dots,s_d\vert z)\,d\bm{s}
\,\nu^{\star}(dz)
<\infty$$
where $\parallel\bm{s}\parallel$ is the Euclidean norm of the vector $\bm{s}=(s_1,\dots,s_d)$.
\end{dfn}

\medskip

The compound Poisson process with jump density $h$ is a compound random measure with a score density $h$ and  whose directing L\'evy process  is  a Poisson process. Therefore, compound random measures can be seen as a generalisation of compound Poisson processes.
It is straightforward to show that $\tilde{\mu}_1,\dots,\tilde{\mu}_d$  can be expressed as
\begin{equation}
\tilde{\mu}_j=\sum_{i=1}^{\infty} m_{j,i}J_i\delta_{X_i}
\label{jump_mug}
\end{equation}
where $m_{1,i},\dots,m_{d,i}\stackrel{i.i.d.}{\sim} h$ are {\it scores} and
\[
\tilde{\eta}=\sum_{i=1}^{\infty} J_i\delta_{X_i}
\]
is a CRM with L\'evy intensity $\nu^{\star}(ds)\alpha(dx)$. This makes the structure of the prior much more explicit. The random measures share the same jump locations (which have distribution $\alpha/\alpha(\mathbb{X})$)
 but the $i$-th jump has a height  $m_{j,i}J_i$  in
the 
 $j$-th measure and so the jump heights are re-scaled by the score (a larger score implies a larger jump height).
Clearly, the shared factor $J_i$ leads to dependence between the jump heights in each measure. 

\medskip

The construction can be seen in an alternative way, in terms of (augmented) dependent Poisson random measures. Indeed, 

$$\mathbb{E}\left[e^{-\lambda \tilde{\mu}_j(A)}\right]=e^{-\alpha(A)\int_0^{\infty}(1-e^{-\lambda s})\int h_j(s|z)\nu^{*}(dz)ds}.$$
If in the exponent we set $s=mz$, then 
$$\mathbb{E}\left[e^{-\lambda \tilde{\mu}_j(A)}\right]=e^{-\alpha(A)\int_0^{\infty}(1-e^{-\lambda mz})\int z h_j(mz|z)\nu^{*}(dz)dm}$$
which entails that $\tilde{\mu}_j(dx)=\int_0^{\infty}\int mz N_j(dm,dz,dx)$ where 
$$N_j=\sum_{i\geq 1} \delta_{(m_{j,i},J_i,X_i)}$$
is a Poisson random measure with intensity on $(0,+\infty)^2\times\mathbb{X}$ and with L\'evy intensity given by $\alpha(dx)z h_j(mz|z)\nu^{*}(dz)$. This is identical to the distribution given by \eqref{jump_mug}. The Poisson processes $(N_1,\dots,N_d)$ are dependent because they share $\{(J_i,X_i): i\geq 1\}$. The term ``augmentation'' here refers to the fact that the Poisson random measures that characterize the CRMs are typically on $(0,+\infty)\times \mathbb{X}$. A third dimension is introduced to account the heterogeneity across different measures.

\bigskip

To ensure the existence of the vectors of normalized CoRM, as introduced in Definition \ref{DefVec}, the following condition must be satisfied for each $j=1,\dots,d$: 
$$\nu_j((0,+\infty))=\int_0^{+\infty}\int h_j(s|z)\nu^{\star}(dz)ds=+\infty$$
where $h_j(s|z)=\int h(s_1,\dots,s_{j-1},s,ds_{j+1},\dots,s_{d}|z)ds_1\cdots ds_{j-1}ds_{j+1}\cdots ds_{d}$. If this condition does not hold true, then $\tilde{\mu}_j(\mathbb{X})=0$ with positive probability and the normalization does not make sense, see \cite{rlp}. 

In this paper, we will concentrate on the sub-class of CoRMs with a continuous score distribution which has independent dimensions and a single scale parameter so that
\[
h(s_1,\dots,s_d\vert z)=z^{-d}\prod_{j=1}^d  f(s_j/z)\,
\]
where $f$ is a univariate distribution. This implies that each marginal process has the same 
 L\'evy intensity of the form
 \begin{equation}
\nu_j(ds)=\nu(ds)=\int z^{-1}f(s\vert z)\,ds\,\nu^{\star}(dz).
\label{link_eqn}
\end{equation}
In Section~\ref{sec:comp_method},
 algorithms are introduced to sample from the posterior of a hierarchical mixture models whose parameters are driven by a vector of normalized compound random measures. These samplers depend crucially on knowing the form of the Laplace Exponent and its derivatives. 
 Some general results about the Laplace exponent and the dependence are available if we assume that the density $z^{-1}f(s_i/z)$ admits a moment generating function.
\begin{thm}
Let
$$M_z^f(t)=\int e^{ts}z^{-1}f(s/z)ds$$
be the moment generating function of $z^{-1}f(s_j/z)$ and suppose that it exists. 
Then
\begin{equation}\label{MGM}
\psi_{\rho,d}(\lambda_1,\dots,\lambda_d)=\int \left(1-\prod_{j=1}^{d} M_z^f(-\lambda_j)\right)\nu^{\star}(z)dz.
\end{equation}
\label{mgf_thm}\end{thm}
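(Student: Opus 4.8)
The plan is to insert the defining expression for $\rho_d$ into the integral representation of the Laplace exponent and then separate variables. Start from \eqref{eq:lapl_exp_star}, namely $\psi_{\rho,d}(\bm\lambda)=\int_{(\R^+)^d}\bigl[1-\edr^{-\langle\bm\lambda,\bm s\rangle}\bigr]\rho_d(ds_1,\dots,ds_d)$, and substitute the CoRM form \eqref{compound} together with the separable, single-scale score $h(s_1,\dots,s_d\vert z)=z^{-d}\prod_{j=1}^d f(s_j/z)$. This rewrites $\psi_{\rho,d}(\bm\lambda)$ as the iterated integral $\int\!\!\int_{(\R^+)^d}\bigl[1-\edr^{-\langle\bm\lambda,\bm s\rangle}\bigr]\,z^{-d}\prod_{j=1}^d f(s_j/z)\,d\bm s\,\nu^\star(dz)$, and the whole theorem reduces to (i) justifying an interchange of the $\bm s$-integral and the $z$-integral, and (ii) computing the inner $\bm s$-integral in closed form.

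The step I expect to be the only real obstacle is the interchange, because $\nu^\star$ is typically an infinite measure, so Tonelli/Fubini must be invoked with care. Here I would use the elementary bound $0\le 1-\edr^{-\langle\bm\lambda,\bm s\rangle}\le\min\{1,\langle\bm\lambda,\bm s\rangle\}$, together with $\langle\bm\lambda,\bm s\rangle=\sum_j\lambda_j s_j\le(\max_j\lambda_j)\sqrt d\,\|\bm s\|$, to dominate the integrand by $C\min\{1,\|\bm s\|\}\,h(\bm s\vert z)$ with $C=\max\{1,(\max_j\lambda_j)\sqrt d\}$. The double integral of this dominating function is exactly (a constant times) the quantity required to be finite in the definition of a compound random measure, so Tonelli applies and the order of integration may be exchanged; the same bound also shows that $z\mapsto 1-\prod_j M_z^f(-\lambda_j)$ is $\nu^\star$-integrable, which is what makes the right-hand side of \eqref{MGM} well defined.

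For the inner integral, write $1-\edr^{-\langle\bm\lambda,\bm s\rangle}=1-\prod_{j=1}^d\edr^{-\lambda_j s_j}$. The constant term integrates as $\int_{(\R^+)^d} z^{-d}\prod_{j=1}^d f(s_j/z)\,d\bm s=\prod_{j=1}^d\int_0^\infty z^{-1}f(s_j/z)\,ds_j=1$, since $f$ is a probability density (equivalently $M_z^f(0)=1$ by the change of variables $u=s/z$). The product term factorizes across coordinates, and each factor is $\int_0^\infty \edr^{-\lambda_j s}z^{-1}f(s/z)\,ds=M_z^f(-\lambda_j)$; note this is automatically finite for $\lambda_j\ge 0$ because $f$ is supported on $(0,\infty)$, so the moment-generating-function hypothesis is only needed to guarantee that $M_z^f$ is the genuine MGF in a neighbourhood of the origin (it plays no role in the convergence of the negative-argument integrals). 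Hence the inner integral equals $1-\prod_{j=1}^d M_z^f(-\lambda_j)$, and integrating against $\nu^\star(dz)$ yields \eqref{MGM}. The remaining verifications — the change of variables $u=s/z$ and the factorization of the product integral — are routine and I would not spell them out in detail.
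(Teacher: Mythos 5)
Your proof is correct and follows essentially the same route as the paper's: substitute the CoRM form of $\rho_d$ into \eqref{eq:lapl_exp_star}, exchange the order of integration, and factorize the inner $\bm{s}$-integral into $1-\prod_{j=1}^d M_z^f(-\lambda_j)$. The only difference is that you explicitly justify the Tonelli interchange via the $\min(1,\|\bm{s}\|)$ integrability condition in the definition of a CoRM, a step the paper's proof passes over silently.
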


The proof of the Theorem stated above is in the appendix as well as a further result about the derivatives of the Laplace exponent. 

\section{CoRMs with independent gamma distributed scores}

In this paper, we will focus on exponential or gamma score distributions.
Throughout the paper we will write $\mbox{Ga}(\phi)$ to be  a gamma distribution (or density) with shape $\phi$ and mean $\phi$ which has density
\begin{equation}\label{Jumps}
f(x)=\frac{1}{\Gamma(\phi)}x^{\phi-1}\exp\{-x\}.
\end{equation}
 This implies that
 $z^{-1}f(y/z)$ is the density of a gamma distribution with shape parameter equal to $\phi$ and mean $\phi\, z$. 
The L\'evy intensities $\nu$ and $\nu^{\star}$ and the score density $f$ are linked by (\ref{link_eqn}) and a CoRM can be defined by either deriving $\nu^{\star}$ for a fixed choice of  $f$ and $\nu$ or by directly specifying $f$ and $\nu^{\star}$. In this latter case, it is interesting to consider the properties of the induced $\nu$.

Standard inversion methods can be used to derive the form of $\nu^*$. Equation \eqref{link_eqn} implies that
$$\nu(s)=\int z^{-1}\frac{1}{\Gamma(\phi)}\left(\frac{s}{z}\right)^{\phi-1}\exp\left(-\frac{s}{z}\right)\nu^*(z) dz$$
The change of variable $t=z^{-1}$ leads to 
$$\nu(s)=\frac{s^{\phi-1}}{\Gamma(\phi)}\int \exp\left(-st\right)t^{\phi-2} \nu^*\left(\frac{1}{t}\right) dt.$$
The above integral can be seen as the classical Laplace transform of the function $f(t)=t^{\phi-2} \nu^*\left(\frac{1}{t}\right)$. If we denote by $\mathcal{L}$ the Laplace transform then
$$\nu(s)=\frac{s^{\phi-1}}{\Gamma(\phi)}\mathcal{L}(f(t))(s)$$
This means that
$$\nu^*\left(\frac{1}{t}\right)=t^{2-\phi}\mathcal{L}^{-1}\left(\frac{\Gamma(\phi)}{s^{\phi-1}}\nu(s)\right)(t)$$
where $\mathcal{L}^{-1}$ is the inverse Laplace transform. This ensures the unicity of $\nu^{*}$.
The forms for some particular choices of marginal process are shown in Table~\ref{t:derived1}.
\begin{table}[h!]
\begin{center}
\begin{tabular}{rll}\hline
$\nu^*(z)$ & Support & Marginal Process\\ \hline
$z^{-1}(1-z)^{\phi-1}$ & $0<z<1$&$\mbox{Gamma}$\\
& \\
$z^{-\sigma-1}\frac{\Gamma(\phi)}{\Gamma(\phi+\sigma)\Gamma(1-\sigma)}$ & $z>0$&$\mbox{$\sigma$-stable}$\\
&\\
$
\frac{\sigma\Gamma(\phi)}{\Gamma(\phi+\sigma)\Gamma(1-\sigma)} z^{-\sigma-1}(1-a\,z)^{\sigma+\phi-1}$ & $0<z<1/a$
 &$\mbox{Gen. Gamma}$ \\ \hline
\end{tabular}
\end{center}
\caption{The form of directing L\'evy intensity in a CoRM which leads to particular marginal processes.}\label{t:derived1}
\end{table}
The results are surprising. A gamma marginal process arises when the directing L\'evy process is a Beta process and a $\sigma$-stable marginal process arises when the directing L\'evy process is also a $\sigma$-stable process. Generalized gamma marginal processes lead to a directing L\'evy process which is a generalization of the Beta process (with a power of $z$ which is less than 1) and re-scaled to the interval $(0,1/a)$. In fact, if we 
use a gamma score distribution with shape $\phi$ and mean $a\phi$ which has density
\begin{equation}\label{Jumps}
f(x)=\frac{1}{a^{\phi}\Gamma(\phi)}x^{\phi-1}\exp\{-x/a\},
\end{equation}
 the directing L\'evy intensity is a stable Beta \cite{tehgor} of the form
\[
\nu^{\star}(z)=\frac{a^{\sigma+1}\sigma}{\phi} \frac{\Gamma(\phi+1)}{\Gamma(\phi+\sigma)\Gamma(1-\sigma)}
z^{-\sigma-1}(1-z)^{\sigma+\phi-1},\qquad 0<z<1.
\]

\begin{rmk} 
Several authors have previously considered hierarchical models where $\tilde\mu_1,\dots,\tilde\mu_d$ followed i.i.d. CRM (or NRMI) processes whose centring measure are given a CRM  (or NRMI) prior.  This construction induces correlation between $\tilde\mu_1,\dots,\tilde\mu_d$ and the hierarchical Dirichlet process is a popular example but we will concentrate on a hierarchical Gamma process \cite[see {\it e.g.}][]{Palla}. In this case, $\tilde\mu_1,\dots,\tilde\mu_d$ follow independent Gamma processes with centring measure $\alpha$ which also follows a Gamma process. This implies that we can write
\[
\alpha = \sum_{i=1}^{\infty} s_i\delta_{\theta_i}
\]
and we can write
\begin{equation}
\tilde\mu_j=\sum_{i=1}^{\infty} J_{j,i}\delta_{\theta_i}
\label{eqn_J}
\end{equation}
where $J_{j,i}\sim\mbox{Ga}(s_i)$. This can be represented as a CoRM process where $\alpha$ is the directing L\'evy process and the score distribution is $\prod_{j=1}^d \mbox{Ga}(s_i)$ where $s_i$ controls the shape of the  conditional distribution of $J_{j,i}$. This contrasts with the processes considered in this section with independent gamma scores which multiply the jumps in the directing L\'evy process and lead to a marginal gamma process for $\tilde\mu_j$ (unlike the hierarchical model). These processes can be written in the form of (\ref{eqn_J}) with $J_{j,i}$ having a gamma distribution with shape $\phi$ and mean $\phi s_i$, and $\alpha$ chosen to follow a beta process. 
\end{rmk}

\begin{rmk}
This paper is focused on Gamma scores but the class of CoRMs is very wide and other choices can be considered. For instance, if $Beta(\alpha,1)$ scores are selected, i.e. 
$$f(x)=\alpha x^{\alpha-1}\qquad \alpha>0,\quad 0<x<1$$
then it is possible to introduce a multivariate version of the Beta process. Let $\nu(s)=\theta s^{-1}(1-s)^{\theta-1}$, $0<s<1$, i.e. the L\'evy intensity of the jumps of a Beta process,
 then $\nu^{\star}(z)$ is the solution of the integral equation
\[
\nu(s)=\int_s^1 f(s/z)s^{-1}\nu^{\star}(z)\,dz,\quad 0<s<1
\]
A simple application of the fundamental Theorem of Calculus leads to
\[
\nu^{\star}(z)=
\theta z^{-1}(1-z)^{\theta-1}+\frac{\theta(\theta-1)}{\alpha}(1-z)^{\theta-2}
\]
which is the sum of $\nu(\cdot)$, the L\'evy intensity of the original Beta process, and a compound Poisson process (if $\theta>1$) with intensity $\theta/\alpha$ and jump distribution $Beta(1,\theta-1)$. This is well-defined if $\theta>1$. 
\end{rmk}
It is interesting to derive the resulting multivariate L\'evy intensities which can be compared with similar results in \cite{LL}, \cite{LLS} and \cite{ZL}. 

\begin{thm}\label{Gamma-GammaLevy}
Consider a CoRM process with independent $\mbox{Ga}(\phi, 1)$ distributed scores. If the CoRM process has gamma process  marginals  then
\begin{equation}
\rho_d(s_1,\dots,s_d)=\frac{(\prod_{j=1}^ds_j)^{\phi-1}}{[\Gamma(\phi)]^{d-1}}|\bm{s}|^{-\frac{d\phi+1}{2}}e^{-\frac{|\bm{s}|}{2}}W_{\frac{(d-2)\phi+1}{2},-\frac{d\phi}{2}}(|\bm{s}|)
\end{equation}
where $|\bm{s}|=s_1+\dots+s_d$ and $W$ is the  Whittaker function. If  the CoRM process has $\sigma$-stable process marginals  then 
\begin{equation}
\rho_d(s_1,\dots,s_d)=\frac{(\prod_{j=1}^ds_j)^{\phi-1}}{[\Gamma(\phi)]^{d-1}}\frac{\Gamma(\sigma+d\phi)}{\Gamma(\sigma)\Gamma(1-\sigma)}|\bm{s}|^{-\sigma-d\phi}.
\end{equation}
\end{thm}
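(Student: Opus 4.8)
The plan is to feed the explicit score density into the defining relation \eqref{compound} and thereby reduce both statements to a single one–dimensional integral over the directing variable $z$, which can then be evaluated in closed form: by an elementary gamma integral in the $\sigma$-stable case and by a classical integral representation of the Whittaker function in the gamma case.

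First I would specialize \eqref{compound}. Since $f$ is the $\mbox{Ga}(\phi,1)$ density \eqref{Jumps} and the score distribution is $h(s_1,\dots,s_d\vert z)=z^{-d}\prod_{j=1}^d f(s_j/z)$, a direct multiplication gives
\[
h(s_1,\dots,s_d\vert z)=\frac{\bigl(\prod_{j=1}^d s_j\bigr)^{\phi-1}}{[\Gamma(\phi)]^{d}}\,z^{-d\phi}\,e^{-|\bm{s}|/z},
\]
so that, writing $u=|\bm{s}|=s_1+\cdots+s_d$,
\[
\rho_d(s_1,\dots,s_d)=\frac{\bigl(\prod_{j=1}^d s_j\bigr)^{\phi-1}}{[\Gamma(\phi)]^{d}}\,I(u),
\qquad I(u):=\int z^{-d\phi}\,e^{-u/z}\,\nu^{\star}(z)\,\ddr z.
\]
Thus the whole problem collapses to evaluating $I(u)$ for the two directing intensities $\nu^{\star}$ recorded in Table~\ref{t:derived1}, which I take as given since they were obtained there through the inverse–Laplace–transform argument.

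For $\sigma$-stable marginals one has $\nu^{\star}(z)\propto z^{-\sigma-1}$ on $(0,\infty)$, and the substitution $t=u/z$ turns $I(u)$ into $\Gamma(d\phi+\sigma)\,u^{-d\phi-\sigma}$ times the normalizing constant of $\nu^{\star}$; collecting the constants against $[\Gamma(\phi)]^{d}$ yields the stated closed form. For gamma marginals $\nu^{\star}(z)=z^{-1}(1-z)^{\phi-1}$ on $(0,1)$; here I would perform the change of variable $z=1/t$ followed by $t=1+v$, which brings $I(u)$ to the shape $e^{-u}\int_0^\infty v^{\phi-1}(1+v)^{(d-1)\phi}e^{-uv}\,\ddr v$. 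Then I would invoke the Tricomi representation $\int_0^\infty v^{a-1}(1+v)^{b-a-1}e^{-zv}\,\ddr v=\Gamma(a)\,U(a,b,z)$ (valid for $\operatorname{Re}a>0$, $\operatorname{Re}z>0$) with $a=\phi$, $b=d\phi+1$, $z=u$, and convert $U$ to the Whittaker function by $U(a,b,z)=e^{z/2}z^{-b/2}W_{b/2-a,\,(b-1)/2}(z)$. The parameters translate to $\kappa=b/2-a=\tfrac{(d-2)\phi+1}{2}$ and $\mu=(b-1)/2=\tfrac{d\phi}{2}$, the factors $e^{-u}$ and $e^{u/2}$ combine to $e^{-u/2}$, and using the evenness $W_{\kappa,\mu}=W_{\kappa,-\mu}$ one reads off exactly the expression in the statement.

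The only genuinely delicate point is the Whittaker identification in the gamma case: one must massage $I(u)$ into the precise $(1+v)$-power form demanded by the Tricomi integral, keeping careful track of the exponents produced by the two substitutions, and then recall the correct normalization linking $U(a,b,z)$ and $W_{\kappa,\mu}(z)$ together with the admissible parameter ranges. The $\sigma$-stable computation, the reduction step, and the verification that every integral converges under $\phi>0$ and $|\bm{s}|>0$ are all routine.
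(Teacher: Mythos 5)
Your proposal is correct and follows essentially the same route as the paper: substitute the product-of-gammas score density into \eqref{compound}, pull out $\bigl(\prod_j s_j\bigr)^{\phi-1}/[\Gamma(\phi)]^{d}$, and evaluate the remaining integral in $z$ — by a direct gamma integral in the $\sigma$-stable case, and in the gamma case by the change of variable $t=1/z$ reducing it to $\int_1^\infty t^{(d-1)\phi}(t-1)^{\phi-1}e^{-|\bm{s}|t}\,dt$, which the paper identifies as a Whittaker function via Gradshteyn--Ryzhik 3.383.4 while you reach the same expression through the Tricomi integral for $U(a,b,z)$ and the standard $U$-to-$W$ conversion with $W_{\kappa,\mu}=W_{\kappa,-\mu}$. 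The parameter bookkeeping ($a=\phi$, $b=d\phi+1$, $\kappa=\tfrac{(d-2)\phi+1}{2}$, $\mu=\tfrac{d\phi}{2}$) checks out and reproduces the stated formula.
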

The result is proved in the appendix with the following corollary.
\begin{coro}\label{Gamma-GammaCoro}
Consider a CoRM process with independent exponentially distributed scores. If the CoRM has gamma process marginals we recover the multivariate L\'evy intensity of \cite{LLS},
$$\rho_d(s_1,\dots,s_d)=\sum_{j=0}^{d-1}\frac{(d-1)!}{(d-1-j)!}|\bm{s}|^{-j-1}e^{-|\bm{s}|}.$$
Otherwise, if $\sigma$-stable marginals are considered then we recover the multivariate vector introduced in \cite{LL} and \cite{ZL},
$$\rho_d(s_1,\dots,s_d)=\frac{(\sigma)_d}{\Gamma(1-\sigma)}|\bm{s}|^{-\sigma-d}.$$
\end{coro}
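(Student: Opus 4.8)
The plan is to evaluate $\rho_d$ directly from its defining integral \eqref{compound}. Inserting the $\mathrm{Ga}(\phi,1)$ density $f(x)=x^{\phi-1}e^{-x}/\Gamma(\phi)$ into the product score $h(s_1,\dots,s_d\vert z)=z^{-d}\prod_{j=1}^d f(s_j/z)$ and simplifying gives
\[
h(s_1,\dots,s_d\vert z)=\frac{\bigl(\prod_{j=1}^d s_j\bigr)^{\phi-1}}{[\Gamma(\phi)]^d}\,z^{-d\phi}\,e^{-|\bm{s}|/z},
\]
so that
\[
\rho_d(s_1,\dots,s_d)=\frac{\bigl(\prod_{j=1}^d s_j\bigr)^{\phi-1}}{[\Gamma(\phi)]^d}\int z^{-d\phi}e^{-|\bm{s}|/z}\,\nu^{\star}(z)\,dz .
\]
The point is that the whole computation now reduces to a one-dimensional integral that depends on $\bm{s}$ only through $|\bm{s}|$, with $\nu^{\star}$ read off from Table~\ref{t:derived1}. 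In the $\sigma$-stable case $\nu^{\star}(z)\propto z^{-\sigma-1}$ on $(0,\infty)$, and the substitution $u=|\bm{s}|/z$ turns the integral into $\Gamma(d\phi+\sigma)\,|\bm{s}|^{-d\phi-\sigma}$; collecting the constant from Table~\ref{t:derived1} then produces the stated closed form.

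For the gamma case $\nu^{\star}(z)=z^{-1}(1-z)^{\phi-1}$ on $(0,1)$, so one must evaluate $\int_0^1 z^{-d\phi-1}(1-z)^{\phi-1}e^{-|\bm{s}|/z}\,dz$. I would substitute $u=1/z$ and then $v=u-1$ to bring this to the form $e^{-|\bm{s}|}\int_0^\infty v^{\phi-1}(1+v)^{(d-1)\phi}e^{-|\bm{s}|v}\,dv$. This is $\Gamma(\phi)$ times the confluent hypergeometric function of the second kind $U(a,b,z)$ (Tricomi's function) via the integral representation $U(a,b,z)=\frac{1}{\Gamma(a)}\int_0^\infty e^{-zt}t^{a-1}(1+t)^{b-a-1}\,dt$, here with $a=\phi$ and $b=d\phi+1$. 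Converting to the Whittaker function through $W_{\kappa,\mu}(z)=e^{-z/2}z^{\mu+1/2}U(\mu-\kappa+\tfrac12,2\mu+1,z)$ forces $\mu=d\phi/2$ and $\kappa=\bigl((d-2)\phi+1\bigr)/2$, and the symmetry $W_{\kappa,\mu}=W_{\kappa,-\mu}$ turns this into exactly the claimed expression. Corollary~\ref{Gamma-GammaCoro} then follows by setting $\phi=1$: in the $\sigma$-stable case one rewrites $\Gamma(\sigma+d)/\Gamma(\sigma)=(\sigma)_d$, and in the gamma case $U(1,d+1,\cdot)$ collapses to an elementary polynomial times an exponential, which is the displayed finite sum.

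The routine parts are the changes of variables and the bookkeeping of the $\Gamma$-function constants. The step that demands care is the identification of the gamma-marginal integral with the Whittaker function and, in particular, matching the two indices $(\kappa,\mu)$ correctly; a good sanity check is the case $d=1$, where the identity $U(\phi,\phi+1,z)=z^{-\phi}$ forces $\rho_1(s)=s^{-1}e^{-s}$, i.e. the gamma-process L\'evy intensity, as it must be. One should also record the finiteness conditions: the moment generating function $M_z^f(t)=(1-zt)^{-\phi}$ exists for $t<1/z$, so Theorem~\ref{mgf_thm} applies and can serve as an independent check through the Laplace exponent, and all the one-dimensional integrals above converge for every $\bm{s}$ with strictly positive coordinates, so the representation \eqref{compound} is legitimate.
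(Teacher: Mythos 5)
Your argument is correct and follows essentially the same route as the paper: reduce $\rho_d$ to a one-dimensional integral in $z$ depending on $\bm{s}$ only through $|\bm{s}|$, identify the gamma-marginal case with a confluent hypergeometric function of the second kind, and specialize to $\phi=1$ where it collapses to a finite sum (your $U(\phi,d\phi+1,|\bm{s}|)$ is the paper's $U(-(d-1)\phi,1-d\phi,|\bm{s}|)$ up to the factor $|\bm{s}|^{-d\phi}$ via Kummer's transformation, and your binomial expansion of $(1+t)^{d-1}$ in the integral representation plays the role of the paper's appeal to the polynomial form of $U(-d+1,-d+1,\cdot)$). The $d=1$ sanity check and the treatment of the $\sigma$-stable case are likewise fine.
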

Alternatively, we can specify $\nu^{\star}$ and derive $\nu$. The forms for some particular processes are shown in Table~\ref{t:derived2}
\begin{table}[h!]
\begin{center}
\begin{tabular}{rl}\hline
$\nu(s)$ & Directing L\'evy process\\\hline
$\frac{\Gamma(\theta+1)}{\Gamma(\phi)}s^{-1}
\exp\{-s\}U(\theta-\phi,1-\phi, s)$ & Beta \\
$2\frac{1}{\Gamma(\phi)}\frac{\sigma}{\Gamma(1-\sigma)}s^{(\phi-\sigma)/2-1}
a^{(\sigma+\phi)/2}
K_{\sigma+\phi}\left(2\sqrt{a s}\right)$
  & Gen. Gamma \\\hline
\end{tabular}
\end{center}
\caption{The L\'evy intensity of the marginal process in a CoRM with different directing L\'evy processes.}
\label{t:derived2}
\end{table}
where $U$ is the confluent hypergeometric function of the second kind and $K$ is the modified Bessel function of the second kind.

\begin{rmk}

There are several special cases if $\nu^{\star}$ is the L\'evy intensity of a Beta process. Firstly, $U(\theta-\phi,1-\phi,s)=1$ if $\theta=\phi$ and $\nu$ is the L\'evy intensity of a gamma process. 
If $\phi = 2\theta-1$, $$U(\theta-\phi, 1-\phi, s)=\pi^{-1/2}\exp\{s/2\}s^{1/2-\theta+\phi}K_{\theta-1/2}(s/2).$$
When $\theta=1$, $U(1-\phi,1-\phi, s) = \exp\{s\}\int_{s}^{\infty} u^{-(1-\phi)}\exp\{-u\}\,du$. The limits as $s\rightarrow 0$ are 
\[
U(\theta-\phi,1-\phi,s)\rightarrow\left\{
\begin{array}{lc}
\Gamma(\phi)/\Gamma(\theta)+O(|s|^{\phi}) & 0<\phi<1\\
1/\Gamma(1+\theta-\phi)+O(|s\log s|) & \phi=1\\
\Gamma(\phi)/\Gamma(\theta)+O(|s|) & \phi>1
\end{array}
\right.
\]
Therefore, these processes have a L\'evy intensity similar to the L\'evy intensity of the gamma process 
 close to zero for any choice of $\phi$ and $\theta$. The tails of the L\'evy intensity are exponential. Therefore, the process has similar
properties to the gamma process.
\end{rmk}

\begin{rmk}
The generalized gamma process  contains some special cases and the L\'evy intensity of the marginal process for these process are shown in Table~\ref{t:derived3}.
\begin{table}[h!]
\begin{center}
\begin{tabular}{rl}\hline
$\nu(s)$ & Directing L\'evy process\\\hline
$2\frac{1}{\Gamma(\phi)}s^{\phi/2-1}
K_{\phi}\left(2\sqrt{s}\right)$ & Gamma Process\\
$\frac{\Gamma(\phi+\sigma)}{\Gamma(\phi)}\frac{\sigma}{\Gamma(1-\sigma)}s^{-1-\sigma}$ &  $\sigma$-stable Process.\\\hline
\end{tabular}
\end{center}
\caption{The L\'evy intensity of the marginal process in a CoRM with different directing L\'evy processes.}
\label{t:derived3}
\end{table}
With a generalized gamma directing L\'evy process,
It is straightforward to show that 
\[
\nu(s)\approx
\sigma\frac{\Gamma(\sigma+\phi)}{\Gamma(\phi)\Gamma(1-\sigma)}
 s^{-\sigma-1}
\]
for small $s$.
Therefore, the L\'evy intensity close to zero is similar to the L\'evy intensity of $\sigma$-stable process with parameter $\sigma$. For large $s$, we have 
\[
\nu(s)s\propto
\sqrt{\pi}\frac{1}{\Gamma(\phi)}\frac{\sigma}{\Gamma(1-\sigma)}(a s)^{(\phi+\sigma)/2-1/4}
s^{-1-\sigma}
\exp\{-2\sqrt{a}s^{1/2}\}.
\]
Therefore, the tails will decays like $\exp\{-s^{1/2}\}$.
\end{rmk}

The next Theorems will provide an expression of the Laplace exponent when the scores are gamma distributed with $\phi\geq 1$ such that $\phi\in\mathbb{N}$. We want to stress the importance of the the Laplace transform in the Bayesian nonparametric setting. Indeed, it is the basis to prove theoretical results of the prior of interest. For instance, \cite{LL}, \cite{LLS} and \cite{ZL} used the Laplace Transform to derive some distributional properties such as correlation, partition structure and mixed moments. Additionally, we will see that the Laplace transform plays a role in the novel sampler proposed in this paper.      

\begin{thm}\label{LapFinal}
Consider a CoRM process with  independent $\mbox{Ga}(\phi, 1)$ distributed scores.
Suppose $\phi\geq 1$ such that $\phi\in\mathbb{N}$. Let $\bm{\lambda}\in(\mathbb{R}^+)^d$ be a vector such that it consists of $l\le d$ distinct values denoted as $\bm{\tilde{\lambda}}=(\tilde{\lambda}_1,\ldots,\tilde{\lambda}_l)$ with respective multiplicities $\bm{n}=(n_1,\dots,n_l)$. 
Then
\begin{equation*}
\label{eq:casogenerale}
\psi_{\rho,d}(\bm{\lambda})
=\psi_{\rho,d}(\bm{\tilde{\lambda}},\bm{n})=\frac{[\Gamma(\phi)]^l}{\prod_{i=1}^l [\tilde{\lambda}_i^{\phi-1}\Gamma(n_i\phi)]}\left(\prod_{i=1}^l\frac{\partial^{(n_i-1)\phi}}{\partial^{(n_i-1)\phi}\tilde{\lambda}_i}\right)\left(\Upsilon_l^{\phi}(\bm{\tilde{\lambda}})\prod_{i=1}^l \tilde{\lambda}_i^{n_i\phi-1}\right),
\end{equation*}
where 
$$\Upsilon^{\phi}_l(\bm{\tilde{\lambda}})=\int \left(1-\prod_{i=1}^{l} \frac{1}{(1+z\tilde{\lambda}_i)^{\phi}}\right)\nu^{\star}(z)dz.$$
\end{thm}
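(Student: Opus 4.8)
The plan is to start from Theorem~\ref{mgf_thm} and to convert the powers $n_i\phi$ that appear once equal coordinates of $\bm\lambda$ are grouped into the single powers $\phi$ that feature in $\Upsilon^\phi_l$. First I would specialise Theorem~\ref{mgf_thm} to $\mbox{Ga}(\phi,1)$ scores: since $z^{-1}f(s/z)$ is the density of a gamma law of shape $\phi$ and rate $1/z$, its moment generating function is $M^f_z(t)=(1-zt)^{-\phi}$, so $M^f_z(-\lambda_j)=(1+z\lambda_j)^{-\phi}$. Substituting into \eqref{MGM} and collecting the $l$ distinct values $\tilde\lambda_1,\dots,\tilde\lambda_l$ with multiplicities $n_1,\dots,n_l$ gives
\[
\psi_{\rho,d}(\bm\lambda)=\int\Bigl(1-\prod_{i=1}^l(1+z\tilde\lambda_i)^{-n_i\phi}\Bigr)\nu^\star(z)\,dz .
\]

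The core of the argument is the one-variable identity --- valid because $\phi\in\mathbb{N}$, so that $(n-1)\phi$ is a non-negative integer and ordinary differentiation is meant ---
\[
\frac{\partial^{(n-1)\phi}}{\partial\lambda^{(n-1)\phi}}\Bigl[\lambda^{n\phi-1}(1+z\lambda)^{-\phi}\Bigr]=\frac{\Gamma(n\phi)}{\Gamma(\phi)}\,\lambda^{\phi-1}(1+z\lambda)^{-n\phi},
\]
together with the trivial companion $\partial_\lambda^{(n-1)\phi}\bigl[\lambda^{n\phi-1}\bigr]=\tfrac{\Gamma(n\phi)}{\Gamma(\phi)}\lambda^{\phi-1}$. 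I would prove the first by the substitution $u=1+z\lambda$, which after cancelling the resulting powers of $z$ reduces it to $\partial_u^{(n-1)\phi}\bigl[(u-1)^{n\phi-1}u^{-\phi}\bigr]=\tfrac{\Gamma(n\phi)}{\Gamma(\phi)}(u-1)^{\phi-1}u^{-n\phi}$. Expanding the left side with the Leibniz rule, the Gamma factors produced by differentiating $(u-1)^{n\phi-1}$ exactly $k$ times and $u^{-\phi}$ the remaining $(n-1)\phi-k$ times cancel pairwise, and after pulling out $(u-1)^{n\phi-1}u^{-n\phi}$ the residual sum is $\sum_k\binom{(n-1)\phi}{k}(-1)^{(n-1)\phi-k}\bigl(\tfrac{u}{u-1}\bigr)^k=\bigl(\tfrac{1}{u-1}\bigr)^{(n-1)\phi}$ by the binomial theorem, which closes the identity.

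To finish I would, for each fixed $z$, apply the operator $\prod_{i=1}^l\partial^{(n_i-1)\phi}_{\tilde\lambda_i}$ to $G(\bm{\tilde\lambda},z)\prod_{i=1}^l\tilde\lambda_i^{n_i\phi-1}$, where $G(\bm{\tilde\lambda},z)=1-\prod_{i=1}^l(1+z\tilde\lambda_i)^{-\phi}$; writing this product as $\prod_i\tilde\lambda_i^{n_i\phi-1}-\prod_i\bigl((1+z\tilde\lambda_i)^{-\phi}\tilde\lambda_i^{n_i\phi-1}\bigr)$, both summands factor over $i$, the operator acts factorwise, and the two identities above give
\[
\Bigl(\prod_{i=1}^l\partial^{(n_i-1)\phi}_{\tilde\lambda_i}\Bigr)\Bigl[G(\bm{\tilde\lambda},z)\prod_{i=1}^l\tilde\lambda_i^{n_i\phi-1}\Bigr]=\frac{\prod_{i=1}^l\Gamma(n_i\phi)}{[\Gamma(\phi)]^l}\,\prod_{i=1}^l\tilde\lambda_i^{\phi-1}\Bigl(1-\prod_{i=1}^l(1+z\tilde\lambda_i)^{-n_i\phi}\Bigr).
\]
Integrating this against $\nu^\star(z)\,dz$ reproduces $\tfrac{\prod_i\Gamma(n_i\phi)}{[\Gamma(\phi)]^l}\prod_i\tilde\lambda_i^{\phi-1}\,\psi_{\rho,d}(\bm\lambda)$ on the right, while on the left --- once differentiation and integration are interchanged --- one obtains $\bigl(\prod_i\partial^{(n_i-1)\phi}_{\tilde\lambda_i}\bigr)\bigl[\Upsilon^\phi_l(\bm{\tilde\lambda})\prod_i\tilde\lambda_i^{n_i\phi-1}\bigr]$; multiplying by the stated prefactor $\tfrac{[\Gamma(\phi)]^l}{\prod_i[\tilde\lambda_i^{\phi-1}\Gamma(n_i\phi)]}$ cancels all constants and all powers of $\tilde\lambda_i$, yielding the claim. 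The only genuinely delicate step --- and the main obstacle --- is this interchange: the two integrals $\int\prod_i\tilde\lambda_i^{n_i\phi-1}\nu^\star(z)\,dz$ and $\int\prod_i(1+z\tilde\lambda_i)^{-\phi}\tilde\lambda_i^{n_i\phi-1}\nu^\star(z)\,dz$ diverge separately and only their difference converges, so one must not split inside the integral. I would justify it by expanding $\bigl(\prod_i\partial^{(n_i-1)\phi}_{\tilde\lambda_i}\bigr)\bigl[\Upsilon^\phi_l\prod_i\tilde\lambda_i^{n_i\phi-1}\bigr]$ with Leibniz: every derivative that falls on $\Upsilon^\phi_l$ is of the form $\bigl(\prod_i\partial^{m_i}_{\tilde\lambda_i}\bigr)\Upsilon^\phi_l$ with $\bm m\neq\bm 0$, and for such a multi-index the integrand $\bigl(\prod_i\partial^{m_i}_{\tilde\lambda_i}\bigr)G(\bm{\tilde\lambda},z)$ is, uniformly on compact $\bm{\tilde\lambda}$-sets, bounded by a fixed multiple of $\min(1,z)$ --- each $\partial_{\tilde\lambda_i}$ replaces a factor $(1+z\tilde\lambda_i)^{-\phi}$ by $-\phi z(1+z\tilde\lambda_i)^{-\phi-1}$, which produces the extra power of $z$ near the origin and keeps the tail bounded --- so differentiation passes under the integral by dominated convergence, using $\int\min(1,z)\,\nu^\star(z)\,dz<\infty$, which is exactly the integrability condition built into the definition of a CoRM.
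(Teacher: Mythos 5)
Your proof is correct and follows essentially the same route as the paper: specialise Theorem 3.1 to gamma scores to get $\psi_{\rho,d}=\int\bigl(1-\prod_i(1+z\tilde{\lambda}_i)^{-n_i\phi}\bigr)\nu^{\star}(z)\,dz$, then reduce the grouped powers $n_i\phi$ to single powers $\phi$ via the identity $\frac{\partial^{(n-1)\phi}}{\partial\lambda^{(n-1)\phi}}\bigl[\lambda^{n\phi-1}(1+z\lambda)^{-\phi}\bigr]=\frac{\Gamma(n\phi)}{\Gamma(\phi)}\lambda^{\phi-1}(1+z\lambda)^{-n\phi}$, which you establish by the same Leibniz/binomial collapse the paper uses (your substitution $u=1+z\lambda$ is a harmless repackaging). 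The only genuine addition is your explicit justification of the derivative--integral interchange via the bound $C\min(1,z)$ and the CoRM integrability condition, a point the paper passes over in silence.
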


The proof of the previous Theorem is based on the result provided in Theorem \ref{mgf_thm} since the moment generating of a Gamma distribution exists and it is explicit. 

To compute the expression of $\Upsilon_l^{\phi}(\bm{\tilde{\lambda}})$ we need to define the following set
$$A_{\phi,j}=\{\bm{k}\in\{1,\ldots,\phi\}^j:\: |\bm{k}|=\phi\}\qquad \phi\geq j.$$

\begin{thm} 
Consider a CoRM process with independent $\mbox{Ga}(\phi, 1)$ distributed scores.
Suppose $\phi\geq 1$ such that $\phi\in\mathbb{N}$. 
Let $\Lambda(\bm{\tilde{\lambda}},\bm{z})=(1-\sum_{h=1}^{j-1}z_h)\tilde{\lambda}_{i_j}+\sum_{h=1}^{j-1} \tilde{\lambda}_{i_h}z_h$ be a function defined on the (j-1)-dimensional simplex $$\Delta_{j-1}=\{\mathbf{z}\in (0,1)^{j-1}:z_1+\cdots+z_{j-1}<1\}$$ with the convention that $\Delta_0=[0,1]$ . Let  
$$a_i(\bm{\tilde{\lambda}})=\frac{\tilde{\lambda}_i^{l-1}}{\prod_{\substack{j=1\\j\neq i}}^l (\tilde{\lambda}_i-\tilde{\lambda}_j)}\qquad i=1,\dots,l.$$ 
then
\begin{equation*}
\Upsilon_l^{\phi}(\bm{\tilde{\lambda}})=\left\lbrace\begin{array}{ll}
\phi!\displaystyle\sum_{j=1}^{\phi}\sum_{\bm{k}\in A_{\phi,j}}\sum_{0<i_1<i_2<\cdots<i_j\leq l} \frac{a_{i_1}^{k_1}(\bm{\tilde{\lambda}})\cdots a_{i_j}^{k_j}(\bm{\tilde{\lambda}})}{k_1!\dots k_j!}C(i_1,\dots,i_j;\bm{k};\bm{\tilde{\lambda}})&\mbox{ if } l>1\\
\psi(\lambda_1)&\mbox{ if } l=1\end{array}\right.
\label{eq:phi}
\end{equation*}
where
$$
C(i_1,\dots,i_j;\bm{k};\bm{\tilde{\lambda}})=
\Gamma(\phi)\int_{\Delta_{j-1}} \left((1-\sum_{h=1}^{j-1}z_h)^{k_j}\prod_{h=1}^{j-1} \frac{z_h^{k_h-1}}{\Gamma(k_h)}\right)\psi\left(\Lambda(\bm{\tilde{\lambda}},\bm{z})\right)d\bm{z}
$$
For the above integral we assume the usual convention that $\sum_{i}^j=0$ and $\prod_{i}^j=1$ whenever $i>j$.  
\end{thm}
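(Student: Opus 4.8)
The plan is to reduce the $l$-dimensional integral $\Upsilon_l^{\phi}(\bm{\tilde{\lambda}})=\int\big(1-\prod_{i=1}^l(1+z\tilde\lambda_i)^{-\phi}\big)\nu^{\star}(z)\,dz$ to a finite superposition of the univariate Laplace exponents $\psi(\cdot)=\Upsilon_1^{\phi}(\cdot)=\int\big(1-(1+z\,\cdot)^{-\phi}\big)\nu^{\star}(z)\,dz$, via a partial–fraction expansion of the integrand followed by a Feynman/Dirichlet parametrisation of the resulting products. The starting point is the elementary identity
\[
\prod_{i=1}^{l}\frac{1}{1+z\tilde\lambda_i}=\sum_{i=1}^{l}\frac{a_i(\bm{\tilde{\lambda}})}{1+z\tilde\lambda_i},\qquad
a_i(\bm{\tilde{\lambda}})=\frac{\tilde\lambda_i^{\,l-1}}{\prod_{j\neq i}(\tilde\lambda_i-\tilde\lambda_j)},
\]
where the $a_i(\bm{\tilde\lambda})$ are the residues of the rational function at its simple poles, equivalently the coefficients in the Lagrange interpolation of $x^{l-1}$ at the nodes $\tilde\lambda_1,\dots,\tilde\lambda_l$; since this interpolant equals $x^{l-1}$ identically, reading off its leading coefficient gives the normalisation $\sum_{i=1}^{l}a_i(\bm{\tilde{\lambda}})=1$, which will be used to match the constant term.

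Next I would raise the displayed identity to the power $\phi$ and expand by the multinomial theorem. Grouping the terms according to the support $\{i_1<\cdots<i_j\}$ of the multi-index — the nonzero exponents then form a vector $\bm k=(k_1,\dots,k_j)$ with $k_h\ge 1$ and $|\bm k|=\phi$, i.e.\ $\bm k\in A_{\phi,j}$ — yields
\[
\prod_{i=1}^{l}\frac{1}{(1+z\tilde\lambda_i)^{\phi}}
=\phi!\sum_{j=1}^{\phi}\sum_{\bm k\in A_{\phi,j}}\sum_{0<i_1<\cdots<i_j\le l}
\frac{a_{i_1}^{k_1}(\bm{\tilde{\lambda}})\cdots a_{i_j}^{k_j}(\bm{\tilde{\lambda}})}{k_1!\cdots k_j!}\,
\prod_{h=1}^{j}\frac{1}{(1+z\tilde\lambda_{i_h})^{k_h}},
\]
where terms with $j>l$ are empty. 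Evaluating this at $z=0$ and invoking $\sum_i a_i=1$ shows that the scalar coefficients $\phi!\,\prod a_{i_h}^{k_h}/\prod k_h!$ sum to one; subtracting the two expressions therefore represents $1-\prod_i(1+z\tilde\lambda_i)^{-\phi}$ as the same finite combination of the blocks $1-\prod_{h=1}^{j}(1+z\tilde\lambda_{i_h})^{-k_h}$.

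The third step treats each such block, for which $\sum_{h=1}^j k_h=\phi$. The Dirichlet (Feynman-parameter) identity
\[
\prod_{h=1}^{j}\frac{1}{(1+z\tilde\lambda_{i_h})^{k_h}}
=\frac{\Gamma(\phi)}{\prod_{h=1}^{j}\Gamma(k_h)}\int_{\Delta_{j-1}}
\Big(1-\textstyle\sum_{h=1}^{j-1}z_h\Big)^{k_j-1}\prod_{h=1}^{j-1}z_h^{k_h-1}\,
\big(1+z\,\Lambda(\bm{\tilde{\lambda}},\bm z)\big)^{-\phi}\,d\bm z
\]
expresses the product as an average over $\Delta_{j-1}$ of single resolvents raised to the power $\phi$, where $\Lambda(\bm{\tilde{\lambda}},\bm z)=\sum_{h}w_h\tilde\lambda_{i_h}$ is exactly the convex combination with barycentric weights $(z_1,\dots,z_{j-1},1-\sum_{h<j}z_h)$ appearing in the statement. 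Because the Dirichlet weight normalises to one, the same weighting converts $1-\prod_h(1+z\tilde\lambda_{i_h})^{-k_h}$ into $\int_{\Delta_{j-1}}(\text{weight})\,\big(1-(1+z\Lambda)^{-\phi}\big)\,d\bm z$. Integrating against $\nu^{\star}(dz)$, interchanging the order of integration, and recognising $\int\big(1-(1+z\Lambda)^{-\phi}\big)\nu^{\star}(dz)=\psi(\Lambda)=\Upsilon_1^{\phi}(\Lambda)$ identifies the block integral $\int\big(1-\prod_h(1+z\tilde\lambda_{i_h})^{-k_h}\big)\nu^{\star}(dz)$ with $C(i_1,\dots,i_j;\bm k;\bm{\tilde{\lambda}})$, and reassembling the combination from the second step gives the asserted formula; the case $l=1$ is just the definition of $\psi$ (there being a single block, with $j=1$, $\bm k=(\phi)$, $\Lambda\equiv\tilde\lambda_1$ and $a_1\equiv 1$).

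The only genuinely analytic point is the interchange of integration: there the integrand on $\Delta_{j-1}\times(0,\infty)$ is nonnegative and, since $\Lambda\le\bar\lambda:=\max_i\tilde\lambda_i$ and $z\mapsto 1-(1+z\Lambda)^{-\phi}$ is increasing in $\Lambda$, it is dominated by $(\text{weight})\cdot\big(1-(1+z\bar\lambda)^{-\phi}\big)$, whose double integral is $\big(\prod_h\Gamma(k_h)/\Gamma(\phi)\big)\,\psi(\bar\lambda)<\infty$ because $\psi=\Upsilon_1^{\phi}$ is finite on $\mathbb{R}^{+}$ (this finiteness, guaranteed by the integrability condition in the definition of a CoRM, is all that is used); Tonelli–Fubini then applies. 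I expect the real work to be organisational rather than conceptual: carefully setting up the bijection between multi-indices and the triples $(j,\bm k\in A_{\phi,j},\,i_1<\cdots<i_j)$, handling the degenerate case $j=1$ with $\Delta_0=[0,1]$, and verifying that the Feynman-parameter identity reproduces the precise Dirichlet weight and normalising constant $\Gamma(\phi)/\prod_h\Gamma(k_h)$ entering $C(i_1,\dots,i_j;\bm k;\bm{\tilde{\lambda}})$.
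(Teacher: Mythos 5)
Your proposal is correct, and its combinatorial skeleton --- the partial-fraction identity $\prod_{i=1}^{l}(1+z\tilde{\lambda}_i)^{-1}=\sum_{i=1}^{l} a_i(\bm{\tilde{\lambda}})(1+z\tilde{\lambda}_i)^{-1}$, the normalisation $\sum_{i=1}^{l} a_i(\bm{\tilde{\lambda}})=1$, the multinomial expansion of the $\phi$-th power, and the regrouping of multi-indices by their support so as to pass from $B_{\phi,l}$ to the sets $A_{\phi,j}$ --- is exactly the paper's. Where you genuinely diverge is in the evaluation of the block integral $C(i_1,\dots,i_j;\bm{k};\bm{\tilde{\lambda}})$: you apply the Feynman--Dirichlet parametrisation directly to $\prod_{h}(1+z\tilde{\lambda}_{i_h})^{-k_h}$ and then recognise $\int\bigl(1-(1+z\Lambda)^{-\phi}\bigr)\nu^{\star}(dz)=\psi(\Lambda)$ via Theorem 3.1 with $d=1$, whereas the paper first writes each factor $(1+z\tilde{\lambda}_{i_h})^{-k_h}$ as the Laplace transform of a gamma density in an auxiliary variable $y_{i_h}$, uses the link equation to convert the $z$-integral against $\nu^{\star}$ into the marginal intensity $\nu(|\bm{y}|)$, and only then changes to simplex coordinates $(\rho,\bm{z})$ so that $\psi(\Lambda)=\int_0^{\infty}(1-e^{-\rho\Lambda})\nu(\rho)\,d\rho$ appears. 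The two routes are equivalent (your parametrisation is precisely what survives once the paper's radial $\rho$-integration has been carried out), but yours is more economical in that it never leaves the directing intensity $\nu^{\star}$ or invokes the marginal $\nu$; your Tonelli justification is adequate since the integrand is nonnegative. One point worth recording: your derivation produces the Dirichlet weight with factor $(1-\sum_{h<j}z_h)^{k_j-1}/\Gamma(k_j)$, in agreement with the first line of the paper's own concluding display; the exponent $k_j$ and the absent $\Gamma(k_j)$ in the stated form of $C$ appear to be a transcription slip (for $j=1$ the stated form would return $\Gamma(\phi)\,\psi(\tilde{\lambda}_{i_1})$ instead of $\psi(\tilde{\lambda}_{i_1})$), so the version you obtain is the correct one.
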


In the following Corollary, the expression of the Laplace exponent is recovered for the special case of a CoRM with independent exponentially distributed scores. 
 
\begin{coro}
Consider a CoRM process with independent exponentially distributed scores. It follows   that
\begin{equation*}
\psi_{\rho,d}(\bm{\lambda})
=\psi_{\rho,d}(\bm{\tilde{\lambda}},\bm{n})=\left(\prod_{i=1}^l\frac{1}{\Gamma(n_i)}\frac{\partial^{(n_i-1)}}{\partial^{(n_i-1)}\tilde{\lambda}_i}\right)\left(\Upsilon_I(\bm{\tilde{\lambda}})\prod_{i=1}^l \tilde{\lambda}_i^{(n_i-1)}\right),
\end{equation*}

\begin{equation*}
\Upsilon_l(\bm{\tilde{\lambda}})=\left\lbrace\begin{array}{ll}
\sum_{i=1}^l a_{i}(\bm{\tilde{\lambda}})\psi(\lambda_i)&\mbox{ if } l>1\\
\psi(\lambda_1)&\mbox{ if } l=1\end{array}\right..
\label{eq:Expocase}
\end{equation*}
\end{coro}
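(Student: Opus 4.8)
The plan is to obtain the Corollary by specializing Theorem~\ref{LapFinal} and the preceding theorem to the case $\phi=1$, since an exponential distribution with unit mean is precisely $\mbox{Ga}(1,1)$. First I would substitute $\phi=1$ into the formula of Theorem~\ref{LapFinal}. The prefactor $\frac{[\Gamma(\phi)]^l}{\prod_{i=1}^l[\tilde{\lambda}_i^{\phi-1}\Gamma(n_i\phi)]}$ collapses to $\prod_{i=1}^l \frac{1}{\Gamma(n_i)}$ because $\Gamma(1)=1$ and $\tilde{\lambda}_i^{\phi-1}=\tilde{\lambda}_i^0=1$; the differential operator $\prod_{i=1}^l \frac{\partial^{(n_i-1)\phi}}{\partial^{(n_i-1)\phi}\tilde{\lambda}_i}$ becomes $\prod_{i=1}^l \frac{\partial^{(n_i-1)}}{\partial^{(n_i-1)}\tilde{\lambda}_i}$; and the trailing monomial $\prod_{i=1}^l \tilde{\lambda}_i^{n_i\phi-1}$ becomes $\prod_{i=1}^l \tilde{\lambda}_i^{n_i-1}$. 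Reorganizing these pieces gives exactly the displayed expression for $\psi_{\rho,d}(\bm{\tilde{\lambda}},\bm{n})$, with $\Upsilon_I(\bm{\tilde{\lambda}})$ standing for $\Upsilon_l^1(\bm{\tilde{\lambda}})$.

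Second, I would evaluate $\Upsilon_l^{\phi}(\bm{\tilde{\lambda}})$ at $\phi=1$ using the preceding theorem. When $\phi=1$, the set $A_{\phi,j}=A_{1,j}$ is nonempty only for $j=1$ (a vector $\bm{k}\in\{1,\dots,1\}^j$ with $|\bm{k}|=1$ forces $j=1$ and $k_1=1$), so the triple sum collapses to a single sum over $1\le i_1\le l$ with $k_1=1$. In that case there are no simplex variables ($\Delta_0=[0,1]$ with an empty product and empty sum), and $\Lambda(\bm{\tilde{\lambda}},\bm{z})$ reduces to $\tilde{\lambda}_{i_1}$, so $C(i_1;1;\bm{\tilde{\lambda}})=\Gamma(1)\,\psi(\tilde{\lambda}_{i_1})=\psi(\tilde{\lambda}_{i_1})$. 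Together with $\phi!=1$ and the $k_1!=1$ in the denominator, this yields $\Upsilon_l^1(\bm{\tilde{\lambda}})=\sum_{i=1}^l a_i(\bm{\tilde{\lambda}})\,\psi(\tilde{\lambda}_i)$ for $l>1$, and $\psi(\lambda_1)$ for $l=1$, matching the stated form of $\Upsilon_l(\bm{\tilde{\lambda}})$ in the Corollary. I should note that the coefficients $a_i(\bm{\tilde{\lambda}})$ are unchanged since their definition does not involve $\phi$.

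The only real subtlety, and the step I expect to be the main obstacle, is handling the degenerate conventions cleanly at $\phi=1$: verifying that the empty-product and empty-sum conventions in $C$ and in $\Lambda$ are applied consistently when $j=1$ (so that $\Delta_{j-1}=\Delta_0$ contributes no genuine integration), and confirming that the restriction $\phi\ge j$ implicit in $A_{\phi,j}$ forces $j=1$ when $\phi=1$. Everything else is a direct and routine substitution. I would therefore present the proof as: (i) set $\phi=1$ in Theorem~\ref{LapFinal}, simplifying the Gamma factors and monomials; (ii) set $\phi=1$ in the expression for $\Upsilon_l^{\phi}$, observing that only the $j=1$, $\bm{k}=(1)$ term survives and that $C(i_1;1;\bm{\tilde{\lambda}})=\psi(\tilde{\lambda}_{i_1})$; (iii) combine to recover the two displayed formulas.
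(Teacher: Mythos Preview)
Your proposal is correct and follows precisely the approach the paper itself indicates: the paper states that the proof is omitted because it is a direct application of the previous theorems, and you have carried out exactly that specialization to $\phi=1$, including careful handling of the degenerate $j=1$ case in $\Upsilon_l^{\phi}$. There is nothing to add—your treatment of the empty-sum/empty-product conventions and the collapse of $A_{1,j}$ to the single term $j=1$, $\bm{k}=(1)$ is exactly what is needed.
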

The proof of the corollary is omitted since it is a direct application of the results of the previous Theorems. Note that, if the vector has Gamma process marginals, {\it i.e.} $\psi(\lambda_i)=\log(1+\lambda_i)$, then we recover the results in \cite{LLS}. If the vector has $\sigma$-stable process marginals, {\it i.e.} $\psi(\lambda_i)=\lambda_i^{\sigma}$, then we recover the result in \cite{LL} and \cite{ZL}. \\

Finally, we close the section with some results about the dependence structure of CoRM processes. A useful description of the dependence of a vector of CRMs is given by the L\'evy copula. A L\'evy Copula is a mathematical tool that allows the construction of multivariate L\'evy intensities with fixed marginals, see appendix. 
The following Theorem displays the underlying L\'evy Copula of a compound random measure. 
\begin{thm}
Let $\rho_d$ be the compound random measure defined in \eqref{compound} and let $F$ be the the distribution function of $f$. The underlying L\'evy Copula of the compound random measure is 
$$C(s_1,\dots,s_d)=\int \nu^{\star}(z)\prod_{j=1}^d (1-F(z^{-1}U^{-1}(s_j)))dz$$
where $U^{-1}$ is the inverse of the tail integral function $U(x):=\int_x^\infty \nu(s)\,\ddr s$. 
\end{thm}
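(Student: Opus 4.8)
The plan is to unwind the definition of the L\'evy copula (recalled in the appendix) in terms of tail integrals and to compute both the marginal and the joint tail integrals of $\rho_d$ by Tonelli's theorem, using the product structure $h(s_1,\dots,s_d\mid z)=z^{-d}\prod_{j=1}^d f(s_j/z)$ that characterises the sub-class under consideration. For a vector of CRMs on $(0,\infty)^d$ with multivariate L\'evy intensity $\rho_d$ and common marginal intensity $\nu$, the L\'evy copula is the function $C$ such that the joint tail integral $\mathcal{U}(s_1,\dots,s_d):=\rho_d\big([s_1,\infty)\times\cdots\times[s_d,\infty)\big)$ factorises as $\mathcal{U}(s_1,\dots,s_d)=C\big(U(s_1),\dots,U(s_d)\big)$, where $U(s):=\int_s^\infty\nu(t)\,dt$ is the marginal tail integral appearing in the statement. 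So it suffices to (i) compute $U$, (ii) check that $U$ is invertible, (iii) compute $\mathcal{U}$, and (iv) substitute $s_j\mapsto U^{-1}(s_j)$.

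First I would compute the marginal tail integral. By \eqref{link_eqn} the marginal L\'evy intensity has density $\nu(s)=\int z^{-1}f(s/z)\,\nu^{\star}(dz)$, so Tonelli's theorem together with the change of variable $u=t/z$ gives
\[
U(s)=\int_s^\infty\int z^{-1}f(t/z)\,\nu^{\star}(dz)\,dt=\int\Big(\int_{s/z}^\infty f(u)\,du\Big)\,\nu^{\star}(dz)=\int\big(1-F(z^{-1}s)\big)\,\nu^{\star}(dz).
\]
The map $s\mapsto U(s)$ is continuous and strictly decreasing on the support of $\nu$, and $U(0^+)=\nu\big((0,\infty)\big)=+\infty$ by the condition guaranteeing the existence of the normalized CoRM; hence $U$ is a bijection from the support of $\nu$ onto $(0,\infty)$ and $U^{-1}$ is well defined.

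Next I would compute the joint tail integral. Plugging \eqref{compound} and $h(s_1,\dots,s_d\mid z)=z^{-d}\prod_{j=1}^d f(s_j/z)$ into the definition of $\mathcal{U}$ and applying Tonelli's theorem and the product form,
\[
\mathcal{U}(s_1,\dots,s_d)=\int\Big(\prod_{j=1}^d\int_{s_j}^\infty z^{-1}f(t/z)\,dt\Big)\,\nu^{\star}(dz)=\int\prod_{j=1}^d\big(1-F(z^{-1}s_j)\big)\,\nu^{\star}(dz).
\]
Substituting $s_j\mapsto U^{-1}(s_j)$ and using $\mathcal{U}\big(U^{-1}(s_1),\dots,U^{-1}(s_d)\big)=C(s_1,\dots,s_d)$ then yields
\[
C(s_1,\dots,s_d)=\int\nu^{\star}(z)\prod_{j=1}^d\big(1-F(z^{-1}U^{-1}(s_j))\big)\,dz,
\]
which is the claimed expression; one checks directly that this $C$ has the margin and monotonicity ($d$-increasing) properties required of a L\'evy copula, the margin property reducing to $U\circ U^{-1}=\mathrm{id}$.

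The computations are just Fubini--Tonelli, so the genuinely delicate points are the regularity issues: ensuring that $U$ is a true bijection, so that $U^{-1}$ and hence $C$ are unambiguously defined (this is where the divergence condition $\nu((0,\infty))=\infty$ and the continuity of $F$ are used), and checking that the $C$ produced this way really satisfies the axioms of a L\'evy copula rather than merely matching the tail integrals on part of the range. A secondary caveat is that the statement is implicitly set in the sub-class with a single univariate score density $f$, so that all marginals coincide; for a fully general score $h$ one obtains instead $C(s_1,\dots,s_d)=\int\nu^{\star}(z)\prod_{j=1}^d\big(1-F_j(z^{-1}U_j^{-1}(s_j))\big)\,dz$ with the obvious modifications.
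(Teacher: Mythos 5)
Your proposal is correct and follows essentially the same route as the paper: both arguments reduce to computing the joint tail integral of $\rho_d$ via Tonelli and the product structure of the score density, obtaining $\int\nu^{\star}(z)\prod_{j=1}^d(1-F(z^{-1}x_j))\,dz$, and then composing with $U^{-1}$. The only presentational difference is that the paper invokes the Sklar-type theorem of Cont and Tankov to guarantee existence and uniqueness of the copula, whereas you construct $C$ directly and verify the copula axioms and the invertibility of $U$ by hand — a welcome amount of extra care, but not a different proof.
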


Furthermore, it is possible to prove a result similar to Proposition 5 in \cite{LLS}. This result gives a close formula for the mixed moments of two dimensions of a CoRM process. 
The result is expressed in terms of an ordering on sets $\bm{0} \prec\bm{s}_1\prec\,\cdots\,\prec\bm{s}_j$ which is defined in \cite{constsav}.

\begin{thm}
Consider a CoRM process with an independent $\mbox{Ga}(\phi, 1)$ distributed scores. Let $\bm{q}=(q_1,\dots,q_d)$ and let $p_j(\bm{q},k)$ be the set of vectors
$(\bm{\eta},\bm{s}_{1},\ldots,\bm{s}_j)$ such that the coordinates
of $\bm{\eta}=(\eta_1,\ldots,\eta_j)$ are positive and such
that $\sum_{i=1}^j \eta_i=k$. Moreover, 
$\bm{s}_i=(s_{1,i},\dots, s_{d,i})$ are vectors such that
$\bm{0} \prec\bm{s}_1\prec\,\cdots\,\prec\bm{s}_j$ and $\sum_{i=1}^j \eta_i(s_{1,i}+\cdots+s_{d,i})=k=q_1+\dots+q_d$. 
 Then,
\begin{equation*}
\begin{split}
\E\left[\prod_{i=1}^d\{\tilde\mu_i(A)\}^{q_i}\right]&=q_1!\cdots q_d!\: \sum_{k=1}^{\mid \bm{q}\mid}[\alpha(A)]^k\:\times\:\\
&\times  \sum_{j=1}^{\mid \bm{q}\mid} \:\sum_{p_j(\bm{q},k)}\:\prod_{i=1}^j \frac{1}{\eta_i!} \left[\left(\prod_{l=1}^d \frac{(\phi)_{s_{l,i}}}{s_{l,i}!}\right)\int z^{{s_{1,i}+\cdots + s_{d,i}}}\nu^{\star}(z)dz\right]^{\eta_i}
\end{split}
\end{equation*}
where $\mid \bm{q}\mid=q_1+\cdots+q_d$.
\end{thm}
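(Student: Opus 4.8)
The plan is to compute the mixed moments by exploiting the Poisson structure of the CoRM and the standard multivariate Fa\`a di Bruno / cumulant expansion. First I would recall that the vector $(\tilde\mu_1,\ldots,\tilde\mu_d)$ has independent increments, so by \eqref{eq:biLaplace} and \eqref{eq:lapl_exp_star} the joint Laplace transform of $(\tilde\mu_1(A),\ldots,\tilde\mu_d(A))$ is $\exp\{-\alpha(A)\,\psi_{\rho,d}(\bm\lambda)\}$. Mixed moments $\E[\prod_i \{\tilde\mu_i(A)\}^{q_i}]$ are then, up to the sign convention and the factorials $q_1!\cdots q_d!$, the coefficients in the multivariate Taylor expansion of $\exp\{-\alpha(A)\,\psi_{\rho,d}(-\bm\lambda)\}$ at $\bm\lambda=\bm 0$, equivalently the complete Bell-polynomial combination of the ``cumulants'' $\alpha(A)\,\kappa_{\bm s}$, where $\kappa_{\bm s}$ is the coefficient of $\prod_l \lambda_l^{s_{l}}/s_l!$ in $-\psi_{\rho,d}(-\bm\lambda)$. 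Concretely, $\exp$ of a power series produces a sum over set partitions, which is exactly the index set $p_j(\bm q,k)$: $j$ is the number of blocks, $k=|\bm q|$ is the total degree carried by the $\alpha(A)$ factor (hence $[\alpha(A)]^k$), the multiplicities $\bm\eta$ record how many blocks carry each distinct ``shape'' $\bm s_i$, the ordering $\bm 0\prec\bm s_1\prec\cdots\prec\bm s_j$ from \cite{constsav} selects one representative per multiset of shapes, and the $1/\eta_i!$ corrects for the overcounting of identical blocks.

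Next I would identify the per-shape cumulant $\kappa_{\bm s}$ explicitly. From the CoRM representation $\rho_d(ds_1,\ldots,ds_d)=\int z^{-d}\prod_j f(s_j/z)\,d\bm s\,\nu^\star(dz)$ and \eqref{eq:lapl_exp_star}, one has $\psi_{\rho,d}(-\bm\lambda)=\int(1-\prod_j M_z^f(\lambda_j))\,\nu^\star(dz)$ by Theorem~\ref{mgf_thm}. For $\mbox{Ga}(\phi,1)$ scores the density $z^{-1}f(s/z)$ is $\mbox{Ga}(\phi,\phi z)$-type with m.g.f.\ $M_z^f(\lambda)=(1-z\lambda)^{-\phi}$ near $0$, whose expansion is $\sum_{s\ge 0}\frac{(\phi)_s}{s!}(z\lambda)^s$. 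Multiplying over $j=1,\ldots,d$ and reading off the coefficient of $\prod_l\lambda_l^{s_l}$ gives $\big(\prod_{l=1}^d \frac{(\phi)_{s_l}}{s_l!}\big)\int z^{s_1+\cdots+s_d}\nu^\star(z)\,dz$, which is precisely the bracketed factor in the statement. So each ``block'' of shape $\bm s_i$ contributes this quantity, raised to the power $\eta_i$ (number of blocks of that shape), divided by $\eta_i!$.

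Then I would assemble the pieces: writing $\exp\{-\alpha(A)\psi_{\rho,d}(-\bm\lambda)\}=\sum \prod_i \frac{(\alpha(A)\kappa_{\bm s_i})^{\eta_i}}{\eta_i!}$ over partitions, extracting the coefficient of $\prod_i \lambda_i^{q_i}$, multiplying by $q_1!\cdots q_d!$, and re-indexing the partition sum as ``sum over $k$ of $[\alpha(A)]^k$, sum over number of distinct shapes $j$, sum over $p_j(\bm q,k)$'' yields exactly the displayed formula; the constraint $\sum_i\eta_i(s_{1,i}+\cdots+s_{d,i})=k=|\bm q|$ is automatic from matching total degree in each $\lambda_l$. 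I would also note that the sum over $k$ truncates at $|\bm q|$ and over $j$ at $|\bm q|$ because a block of degree $0$ contributes nothing and each block carries degree $\ge 1$. Finally I should check convergence/finiteness of all moments (so that the formal expansion is legitimate), which for $\mbox{Ga}(\phi,1)$ scores follows from existence of the m.g.f.\ assumed in Theorem~\ref{mgf_thm} together with the integrability condition on $\nu^\star$ in the CoRM definition; alternatively one can differentiate $\exp\{-\alpha(A)\psi\}$ a finite number of times and justify interchange of expectation and differentiation by dominated convergence.

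The main obstacle will be the bookkeeping in translating the combinatorial exp-of-a-series identity into the exact index set $p_j(\bm q,k)$ with the ordering of \cite{constsav} and the multiplicity factors $1/\eta_i!$ — i.e.\ verifying that the representative-per-multiset convention plus the $\eta_i!$ correction reproduces the sum over \emph{all} set partitions without over- or under-counting. The analytic content (expanding the Gamma m.g.f., reading off $\frac{(\phi)_{s_l}}{s_l!}$, and the single moment integral $\int z^{|\bm s_i|}\nu^\star(z)\,dz$) is routine; the care lies entirely in the partition combinatorics and in confirming the degree constraints force the stated ranges of $k$ and $j$.
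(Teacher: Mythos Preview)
Your approach is essentially the same as the paper's: both compute the mixed moments by applying the multivariate Fa\`a di Bruno formula of \cite{constsav} to the exponential $\exp\{-\alpha(A)\psi_{\rho,d}(\bm\lambda)\}$ and then specialise the resulting ``cumulants'' $g_\rho(\bm s;\bm 0)=\int\prod_j t_j^{s_j}\,\rho_d(d\bm t)$ to the gamma-score case. The paper simply invokes the general moment identity from \cite{LLS} (their Proposition~5) rather than re-deriving the partition combinatorics, and then computes $g_\nu(s_{1,i},\ldots,s_{d,i};\bm 0)=\prod_{l}(\phi)_{s_{l,i}}\int z^{|\bm s_i|}\nu^\star(z)\,dz$ directly as a moment of the gamma score (which, divided by the $s_{l,i}!$ already present in the Fa\`a di Bruno formula, gives the bracketed factor). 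Your route via the Taylor coefficients of $M_z^f(\lambda)=(1-z\lambda)^{-\phi}=\sum_s \frac{(\phi)_s}{s!}(z\lambda)^s$ is the same computation phrased dually.

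One slip to clean up: you describe ``$k=|\bm q|$ is the total degree carried by the $\alpha(A)$ factor'', but $k$ is \emph{not} the total degree; it is the total number of blocks $k=\sum_i\eta_i$, and it is this count that gives the power $[\alpha(A)]^k$ (from the $k$-th term of the exponential series). The total degree constraint is the separate, componentwise condition $\sum_i\eta_i\,\bm s_i=\bm q$, which forces $k\le|\bm q|$ since each $|\bm s_i|\ge 1$. You recognise this later when you say the sum over $k$ ``truncates at $|\bm q|$'', so the slip is local, but as written your first paragraph conflates the two roles of $k$ and would not reproduce the correct $[\alpha(A)]^k$ weighting. (The statement of the theorem itself is slightly ambiguous on this point, but the proof in the paper and the structure of $p_j(\bm q,k)$ make clear that $k=\sum_i\eta_i$ is the intended meaning.)
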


\begin{rmk} For instance, suppose that the CoRM process has generalized gamma process marginals. Then,
\begin{equation*}
\begin{split}
\int z^{s_{1,i}+\cdots + s_{d,i}}\nu^{\star}(z)dz=\frac{\sigma a^{\sigma-(s_{1,i}+\cdots + s_{d,i})}}{\Gamma(1-\sigma)}B(k-\sigma-1,\sigma+\phi).
\end{split}
\end{equation*}
\end{rmk}

\section{Normalized Compound Random Measures}

Vectors of correlated random probability measures can be defined by normalizing each dimension of a CoRM process. This will be called a Normalized Compound Random Measure (NCoRM) and is  defined by a score distribution, a directing L\'evy process and a centring measure of the CoRM. The results derived in 
Table \ref{t:derived1}
 can be used to define a NCoRM with a particular marginal process. For example, an NCoRM with Dirichlet process marginals arises by normalizing each dimension of a CoRM with gamma process marginals. 

In specifying an NCoRM prior, it is useful to have a method of choosing the parameters of the score distribution to give a particular level of dependence. We describe two possible methods.
 It is possible to compute the covariance of a two dimensions of an NCoRM process. Indeed, following \cite{LLS},
\begin{equation}
  \begin{split}
    \label{eq:cross_moment_simple}
    \mbox{Cov}\left[\tilde p_1(A),\,\tilde p_2(B)\right]
    &=  \left\{\alpha(A\cap B)-
      \frac{\alpha(A)\alpha(B)}{\alpha(\mathbb{X})}\right\} \\[7pt]
    &\qquad \times\:\int_{(\R^+)^2}g_\rho(1,1;\lambda_1,\lambda_2)\,
    \edr^{-\alpha(\mathbb{X})\psi_\rho(\lambda_1,\lambda_2)} \:\ddr \lambda_1\,\ddr \lambda_2
  \end{split}
\end{equation}
where $g_{\rho}$ is the function introduced in Equation \eqref{grho}.
This result can be  used to specify any parameters of the score distribution (or a prior for those parameters). Alternatively, if the scores are independent, the ratio of the same jump heights in the $i$-th and $j$-th dimension has the same distribution as  the ratio of two independent random variables following the score distribution. For example, if the scores are independent and follow a gamma  distribution with shape $\phi$  is chosen, this ratio follows an $F$-distribution with $\phi$ and $\phi$ degrees of freedom.

\subsection{Links to other processes}

Corollary~\ref{Gamma-GammaCoro}  shows how the priors described in 
 \cite{LL}, \cite{LLS} and \cite{ZL} can be expressed in the CoRM framework. The  CNMRI process \citep{GKS13}\citep[see also][]{NLP14a, Chen13}  can also be expressed in the NCoRM framework. The CNMRI prior express the random measure $\tilde\mu_g$ as 
\[
\tilde\mu_j=\sum_{k=1}^q D_{jk} \tilde\mu^{\star}_k
\]
where $D$ is a $(d\times q)$-dimensional selection matrix (with elements either equal to 0 or 1) and $\tilde\mu^{\star}_1,\dots,\tilde\mu^{\star}_q$ are independent CRMs where $\tilde\mu^{\star}_k$ has L\'evy intensity $M_k\nu^{\star}(ds)\bar\alpha(dx)$ for a probability measure $\bar\alpha$. 
A CNRMI process can be represented by a vector of CoRMs with score probability mass function
\[
g(s_1=D_{1i}z,\dots,s_d=D_{di}z\vert z)=\frac{M_i}{\sum_{k=1}^q M_k},
\]
 directing L\'evy intensity $\nu^{\star}$ and centring measure $\bar\alpha\sum_{k=1}^q M_k$. A CoRM process with independent scores  can be used to construct a sub-class of CNRMI processes. A CoRM has a score distribution of the form $f(s)=\pi\delta_{s=1} + (1-\pi)\delta_{s=0}$, directing L\'evy intensity $\nu^{\star}(ds)$ and centring measure $M\bar\alpha$
  is identical to an unnormalized CNRMI process with $q=2^d$, a $D$ whose rows are the binary expansion of $\{0,1,\dots,2^d-1\}$ and $M_k=M\prod_{l=1}^d  \pi^{D_{kl}}(1-\pi)^{1-D_{kl}}$. 
A more general class  of unnormalized CNRMI processes with
$M_k=M\prod_{l=1}^d  \pi_l^{D_{kl}}(1-\pi_l)^{1-D_{kl}}$  which corresponds  to a vector of CRMs such that
\begin{equation}\label{compound2}
\rho_d(ds_1,\dots,ds_d)=\int  z^{-d}\prod_{j=1}^d  f_j(s_j/z)\,ds_1\cdots ds_d
\,\nu^{\star}(dz)
\end{equation}
where $f_j(m) = \pi_j\delta_{m=1}+(1-\pi_j)\delta_{m=0}$.

\subsection{Computational Methods}\label{sec:comp_method}

We describe methods for fitting  a nonparametric mixture model where the mixing measure is given  a NCoRM prior. We assume that the data can be divided into $d$ groups and $y_{j,1},\dots,y_{j,n_j}$ are the observations in the $j$-th group. The data are modelled as
\[
y_{j,i}\stackrel{ind.}{\sim} k(y_{j,i}\vert \zeta_{j,i}),
\quad
\zeta_{j,i}\sim\tilde{p}_j,\quad  i=1,2,\dots,n_j,\quad j=1,\dots,d
\]
where 
$k(y\vert \theta)$ is a probability density function for $y$ with parameter $\theta$ 
and 
$\tilde{p}_1,\dots,\tilde{p}_d$ are given an NCoRM prior.
Using the notation of (\ref{jump_mug}), we write
\[
\tilde{p}_j = \frac{\tilde\mu_j}{\tilde\mu_j(\mathbb{X})}
=\frac{\sum_{k=1}^{\infty} m_{j,k} \,J_k\,  \delta_{\theta_k}}{\sum_{k=1}^{\infty} m_{j,k}\,J_k}.
\]

Direct simulation from the posterior distribution is impossible since there are an infinite number of parameters. Several MCMC methods have been introduced which circumvent this problem in the class of normalized random measure mixtures.
\cite{Favaro2013} describe an auxiliary variable method which involves integrating out the unnormalized random measure whereas
\cite{GW2011} introduce a slice sampling method. We consider extending both methods to NCoRM mixtures.

  We use the notation $m=(m_{j,k})$, $J=(J_1,J_2,\dots)$ and $\theta=(\theta_1,\theta_2,\dots)$. The
 posterior distribution can be expressed in a suitable form for MCMC by introducing latent variables.
 Firstly, latent allocation variables $c=(c_{j,i})$ (for which $\zeta_{j,i}=\theta_{c_{j,i}}$) are introduced to give
 \begin{align}
 p(y,c\vert m, J, \theta)&=
 \prod_{j=1}^d \prod_{i=1}^{n_j} 
\left[ k\left(y_{j,i}\vert \theta_{c_{j,i}} \right)
 \frac{m_{j,c_{j,i}} \,J_{c_{j,i}}}{\sum_{k=1}^{\infty} m_{j,k}\,J_k}\right]\nonumber\\
&= \prod_{j=1}^d \frac{\prod_{i=1}^{n_j} 
 k\left(y_{j,i}\vert \theta_{c_{j,i}} \right)
  m_{j,c_{j,i}} \,J_{c_{j,i}}}{\left(\sum_{k=1}^{\infty} m_{j,k}\,J_k\right)^{n_j}}.
  \label{post1}
 \end{align}
 Secondly, latent variables $v=(v_1,\dots,v_d)$ are introduced to define
\begin{align*}
 p(y,c,v\vert m, J, \theta)
 =& \prod_{j=1}^d \left[\prod_{i=1}^{n_j} 
 k\left(y_{j,i}\vert \theta_{c_{j,i}} \right)
  m_{j,c_{j,i}} \,J_{c_{j,i}}\right]
\prod_{j=1}^d\left[\frac{1}{\Gamma(n_j)} v_j^{n_j-1}\right]\\
&\times 
\exp\left\{-\sum_{j=1}^d v_j  \sum_{k=1}^{\infty} m_{j,k}\,J_k\right\}.
 \end{align*}
 Integrating over $v$ (using the identity $\frac{1}{\Gamma(n)} v^{n-1}\exp\{-vx\}=x^{-n}$) gives the expression in (\ref{post1}).

\subsubsection{Marginal method}

The \cite{Favaro2013} approach relies on an analytical form for $p(y,v, c)$ which is available for the NRMI mixtures using results of \cite{JLP09}. Suppose $\{c_{j,i}\}$ takes $K$ distinct values, that $a_{j,k}$ is the number of observations in the $j$-th group allocated to the $k$-th distinct value and define $a_k=(a_{k,1},\dots,a_{k,d})$. Extending the results of 
\cite{JLP09} and \cite{Favaro2013}  to vectors of normalized random measures (as in Section 2.1) leads to 
\[
p(y,v, c)=
\prod_{i=1}^d\frac{1}{\Gamma(n_i)}v_i^{n_i-1} 
\exp\{-\psi_{\rho,d}(v)\}\prod_{k=1}^K \kappa_{a_k}(v) \prod_{k=1}^K g(
\{y_{j,i}\vert c_{j,i}=k\})
 \]
where 
\[
\psi_{\rho,d}(v) = \int \left(1 - \exp\left\{-\sum_{i=1}^d v_i s_i\right\}\right) \rho_d(ds_1,\dots,ds_d),
\]
\[
\kappa_a(v)=\int \prod_{j=1}^d s_j^{a_j}
\exp\left\{-\sum_{i=1}^d v_i s_i\right\}\rho_d(ds_1,\dots,ds_d)
\]
and
\[
g(y)=\int \prod k(y_{j,i}\vert \theta) \alpha(d\theta).
\]
If the vector of the normalized random measures is chosen to be an NCoRM with independent gamma scores then
\begin{align*}
\kappa_a(v)&=\int \prod_{j=1}^d s_j^{a_j}
\exp\left\{-\sum_{i=1}^d v_i s_i\right\}
z^{-d}\prod_{j=1}^d f(s_j/z)\,ds_1\dots ds_d\,
\nu^{\star}(dz)\\
&=
\int z^{\sum_{j=1}^d a_j}
 \prod_{j=1}^d \int \left[s_j^{a_{j}}
\exp\left\{- v_j z s_j \right\}
 f(s_j)\, ds_j\right]\nu^{\star}(dz)
\\
&=
\int z^{\sum_{j=1}^d a_j}
 \prod_{j=1}^d \tau_{a_j}(z,v_j)\nu^{\star}(dz)
\end{align*}
where
\[
\tau_a(z, v)=\int s^{a}\exp\left\{- v z s\right\}
 f(s)\, ds.
\]
and Theorem 3.1 provides the expression
\begin{align*}
\psi_{\rho,d}(v) &
=\int \left(1-\prod_{j=1}^{d} M_z^f(-s_j)\right)\nu^{\star}(z)dz.
\end{align*}
If $f$ is chosen to be a gamma distribution with shape parameter $\phi$,
\begin{align*}
\tau_a(z,v)
&=\int s^a\exp\left\{- v z s\right\} f(s)\, ds
 =\frac{\Gamma(a+\phi)}{\Gamma(\phi)}(1+vz)^{-a-\phi}.
\end{align*}

Two algorithms can be defined. One is suitable for conjugate mixtures where $g(y)$ can be calculated analytically and a second algorithm is suitable for non-conjugate mixtures where $g(y)$ cannot be calculated analytically.

In the case of a conjugate mixture model, the steps of the algorithm are

\subsubsection*{Updating $c_{j,i}$}

Let $C^{-(j,i)}_k=\{y_{l,m}\vert c_{l,m}=k, (l,m)\neq (j,i)\}$ and $K^{-(j,i)}$ be the number of distinct values of $\{c_{l,m}\vert (l,m)\neq (j,i)\}$. The parameter $c_{j,i}$ is updated from the discrete distribution
\[
p(c_{j,i}=k)\propto
\left\{
\begin{array}{ll}
\frac{\kappa_{a_k+r}(v)g\left(C_k^{-(j,i)}\cup \{y_{j,i}\}\right)}{\kappa_{a_k}(v)g\left(C_k^{-(j,i)}\right)}   & 1\leq k\leq K^{-(j,i)}\\
\kappa_r(v)g(y_{j,i}) & k = K^{-(j,i)} + 1
\end{array}
\right.
\]
where $r$ is a $d$-dimensional vector with $r_m=1$ if $m=j$ and $r_m=0$ otherwise.
For independent $\mbox{Ga}(\phi, 1)$ scores, 
\[
\frac{\kappa_{a_k+r}(v)}{\kappa_{a_k}(v)}
= (a_{j,k}+\phi)
\frac{\int z^{\sum_{m=1}^d a_{m,k}+1}
(1+v_j z)^{-a_{j,k}-1-\phi} \prod_{m=1;m\neq j}^d (1+v_m z)^{-a_{m,k}-\phi}\nu^{\star}(z)\,dz}
 {\int z^{\sum_{m=1}^d a_{m,k}}
 \prod_{m=1}^d (1+v_m z)^{-a_{m,k}-\phi}\nu^{\star}(z)\,dz}
\]
and 
\[
\kappa_r(v)=
  \phi
\int z
(1+v_j z)^{-1-\phi} \prod_{m=1; m\neq j}^d (1+v_m z)^{-\phi}\nu^{\star}(z)\,dz.
\]

\subsubsection*{Updating $v_j$}

The full conditional distribution of $v_j$ is  proportional to
\[
 v_j^{n_j-1} 
\exp\{-\psi_{\rho,d}(v)\}\prod_{k=1}^K \kappa_{a_k}(v).
 \]
This parameter can be updated using an adaptive Metropolis-Hastings random walk \citep{atros}.

\subsubsection*{Updating parameters of $f$}

The full conditional distribution of the parameters of $f$ is  proportional to
\[
\exp\{-\psi_{\rho,d}(v)\}\prod_{k=1}^K \kappa_{a_k}(v).
 \]
This parameter can be updated using an adaptive Metropolis-Hastings random walk \citep{atros}.

In the case of non-conjugate mixtures, \cite{Favaro2013} define an auxiliary variable method which introduces the distinct values $\theta_1,\dots,\theta_K$ into the sampler and $M$ potential distinct values for empty clusters $\theta'_1,\dots,\theta'_M$.

\subsubsection*{Updating $c_{j,i}$}

A set of values $\theta_1,\dots,\theta_M$ is formed. If $c_{j,i}$ is a singleton ({\it i.e.} $c_{j,i}\neq c_{k,m}$ for $(j,i)\neq (k,m)$), set $\theta'_1=\theta_{c_{j,i}}$ and sample $\theta'_j\sim \alpha/\alpha(\mathbb{X})$ for $j=2,\dots,M$. Otherwise, sample 
$\theta'_j\sim \alpha/\alpha(\mathbb{X})$ for $j=1,\dots,M$. The full conditional distribution of $c_{j,i}$ is
\[
p(c_{j,i}=k)\propto
\left\{
\begin{array}{ll}
\frac{\kappa_{a_k+r}(v)}{\kappa_{a_k}(v))}k(y_{j,i}\vert \theta_k)   & 1\leq k\leq K\\
\frac{\alpha(\mathbb{X})}{M}\kappa_r(v) k\left(y_{j,i}\vert \theta'_{k-K^{-(j,i)}}\right) & k = K^{-(j,i)} + 1,\dots,K^{-(j,i)}+M.
\end{array}
\right.
\]

\subsubsection*{Updating $\theta_k$}

The full conditional density of $\theta_k$ is proportional to
\[
\alpha(\theta_k)\prod_{\{(j,i)\vert c_{j,i}=k\}}k(y_{j,i}\vert \theta_k).
\]

The full conditional distributions of $v_j$ and any parameters of $f$ are unchanged from algorithm for conjugate mixture models.

\subsubsection{Slice sampling method}

We introduce $u=(u_{j,i})$ and define
\begin{align*}
 p(y,c,v,u\vert m, J, \theta)
 =& \prod_{j=1}^d \left[\prod_{i=1}^{n_j} 
 k\left(y_{j,i}\vert \theta_{c_{j,i}} \right)
  m_{j,c_{j,i}} \,\mbox{I}(u_{j,i}<J_{c_{j,i}})\right]
\prod_{j=1}^d\left[\frac{1}{\Gamma(n_j)} v_j^{n_j-1}\right]\\
&\times \exp\left\{-\sum_{j=1}^d v_j  \sum_{k=1}^{\infty} m_{j,k}\,J_k\right\}.
 \end{align*} 
 Integrating over $u$ and $v$ gives the expression in (\ref{post1}). A similar form is derived in \cite{GW2011}.
This form of the likelihood is still not suitable for MCMC since it involves all jumps. To avoid this, 
we define $L=\min_{i=1,\dots,n_j; j=1,\dots,d}\left\{u_{j,i}\right\}$ and divide the jumps into two disjoints sets:  $A^{\dagger}=\{(J^{\dagger}_k, m^{\dagger}_{1,k},\dots,m^{\dagger}_{d,k})\vert J^{\dagger}_k>L\}$ 
 and $A^{\star}=\{(J^{\star}_k, m^{\star}_{1,k},\dots,m^{\star}_{d,k})\vert J^{\star}_k\leq L\}$. The set $A^{\dagger}$ has a finite number of elements which is denoted $K$ and $A^{\star}$ has an infinite number of elements.
Integrating over $A^{\star}$ leads to posterior which is suitable for MCMC and has the form
 \begin{align}
&\prod_{j=1}^d 
\left[\prod_{i=1}^{n_j}
k\left(y_{j,i}\vert \theta_{c_{j,i}}\right) m^{\dagger}_{j,c_{j,i}}\,\I\left(u_{j,i}< J_{c_{j,i}}\right)
\right]
\prod_{j=1}^d 
\left[\frac{1}{\Gamma(n_j)}v_j^{n_j-1}\right]
\nonumber\\
&\times\exp\left\{-\sum_{j=1}^d v_j\sum_{k=1}^K m^{\dagger}_{j,k}J^{\dagger}_k\right\}\E\left[
\exp\left\{-\sum_{j=1}^d v_j\sum_{k=1}^{\infty} m^{\star}_{j,k}J^{\star}_k\right\}
\right].
\label{post_MCMC}
\end{align}
An MCMC scheme using this form of likelihood leads to  a random truncation of the NCoRM process  at each iteration  but does not introduce a truncation error since integrating over the latent variables leads to the correct marginal posterior. 
 
The expectation in (\ref{post_MCMC}) can be expressed in terms of a univariate integral using a variation on
Theorem \ref{mgf_thm} giving
\[
-\log \E\left[
\exp\left\{-\sum_{j=1}^d v_j\sum_{k=1}^{\infty} m^{\star}_{j,k}J^{\star}_k\right\}
\right]=
\int_0^L \left(1-\prod_{j=1}^{d} M_z^f(-v_j)\right)\nu^{\star}(z)\,dz.
\]
 The full conditional distributions and a general discussion of methods for updating parameters are given below. Details of the implementation for specific processes are given in the appendix.

\subsubsection*{Updating $v_1,\dots,v_d$}

The updating of $v_1,\dots,v_d$ uses a variation on the interweaving approach of 
\cite{yumeng11}, which leads to better mixing than the standard full conditional distribution for $v_j$.
The parameter  $v_j$ is updated in the following way. Firstly, we re-parameterize to $\tilde{m}^{\dagger}_{j,k}=v_j m^{\dagger}_{j,k}$ and update $v_j$ from the full conditional density (conditioning on $\tilde{m}^{\dagger}_{j,k}$ rather than $m^{\dagger}_{j,k}$) which is proportional to
\[
v_j^{-(K+1)}f\left(\frac{\tilde{m}^{\dagger}_{j,k}}{v_j}\right)
\E\left[
\exp\left\{-v_j\sum_{k=1}^K m^{\star}_{j,k} J^{\star}_k\right\}
\right].
\]
Secondly, we re-parameterized to $m^{\dagger}_{j,k}=\tilde{m}^{\dagger}_{j,k}/v_j$
and update $v_j$ from the full conditional density  proportional to
\[
v_j^{n_j-1}\exp\left\{-v_j \sum_{k=1}^K m^{\dagger}_{j,k} J_k\right\}
\E\left[
\exp\left\{-v_j\sum_{k=1}^K m^{\star}_{j,k} J^{\star}_k\right\}
\right].
\]
Both full conditional densities are sampled  using a Metropolis-Hastings algorithm with random walk and an adaptive proposal distribution.

\subsubsection*{Updating $J^{\dagger}$ and $m^{\dagger}$}

The density of the full conditional distribution of $J^{\dagger}_k$ is proportional to
\[
\I\left(J^{\dagger}_k> \max\{u_{j,i}\vert c_{j,i}=k\}\right)\nu^{\star}\left(J^{\dagger}_k\right)
\exp\left\{-\sum_{l=1}^d v_l\sum_{r=1}^K m^{\dagger}_{l,r}J^{\dagger}_r\right\}
\]
where $n_{j,k}=\sum_{i=1}^{n_j} \I(c_{j,i}=k)$ and the full conditional density of $m^{\dagger}_{j,k}$
 is $\mbox{Ga}\left(\phi+n_{j,k}, 1 + v_j J^{\dagger}_k\right)$.

The elements of $A^{\dagger}$  are also updated using a reversible jump Metropolis-Hastings method with a birth and a death move which are proposed with equal probability. The birth move involves proposing a new jump $J^{\dagger}_{K+1}$ from a density proportional to $\nu^{\star}\left(J^{\dagger}_{K+1}\right)$ for $J^{\dagger}_{K+1}>L$
 and $m^{\dagger}_{1,K+1},\dots,m^{\dagger}_{d,K+1}\stackrel{i.i.d.}{\sim} f$. The death move proposes to delete an element of the set of jumps to which no observationsa are allocated
$
B=\left\{\left(J^{\dagger}_k,m^{\dagger}_{1,k},\dots,m^{\dagger}_{d,k}\right)\left\vert\sum_{j=1}^d n_{j,k}=0 \right. \right\}
$
 uniformly at random. If $b$ is the number of elements in $B$, the acceptance probability for the birth move is
\[
\min\left\{
1,\exp\left\{-\sum_{j=1}^d v_j J^{\dagger}_{K+1}m^{\dagger}_{j,K+1}\right\}
\frac{\int_L^{\infty} \nu^{\star}(z)\,dz}{b+1} \right\}
\]
and the acceptance probability if the $k$-th jump is proposed to be delete is
\[
\min\left\{
1,\exp\left\{\sum_{j=1}^d v_j J^{\dagger}_{k}m^{\dagger}_{j,k}\right\}
\frac{b} {\int_L^{\infty} \nu^{\star}(z)\,dz}\right\}.
\]

\subsubsection*{Updating $u$}

The full conditional distribution of $u_{j,i}$ is a uniform distribution on $\left(0,J^{\dagger}_{c_{j,i}}\right)$ for 
$i=1,\dots,n_j$ and
$j=1,\dots,d$. Let $\kappa$ be the $\min\{u_{j,i}\}$ from the previous iteration and  $\kappa^{\star}$ be the $\min\{u_{j,i}\}$ from the current iteration. If $\kappa^{\star}>\kappa$ then the jumps for which $J^{\dagger}_j<\kappa^{\star}$ are deleted. Otherwise, if $\kappa^{\star}<\kappa$, 
a Poisson distributed number of jumps with mean 
\[
\int_{\kappa^{\star}}^{\kappa} \nu^{\star}(z)\prod_{j=1}^d \int \exp\left\{- v_j m_j\right\}\, f(m_j)\,dm_j\,dz
\]
 are simulated from the density of $z$ proportional to
\[
\nu^{\star}(z)\prod_{j=1}^d \int \exp\left\{- v_j m_j z\right\}\, f(m_j)\,dm_j
,\qquad \kappa^{\star}<z<\kappa
\]
and $p\left(m^{\dagger}_j\right)\propto \exp\{-v_j z\}f\left(m^{\dagger}_j\right)$
Details on simulation for NCoRMs with Dirichlet process and normalized generalized gamma process marginals are provided in Appendix B.

\subsubsection*{Updating $\theta$}

The full conditional distribution of $\theta_k$ is 
\[
\alpha(\theta_k)
\prod_{j=1}^d 
\prod_{\{i\vert c_{j,i}=k\}}
k\left(y_{j,i}\vert \theta_k\right),\qquad k=1,\dots,K
\]

\subsubsection*{Updating the parameters of the NCoRM prior}

The full conditional distribution of the parameters of the NCoRM prior are  proportional to
\begin{align*}
&\prod_{j=1}^d \prod_{k=1}^K f\left(m_{j,k}^{\dagger}\right)\prod_{k=1}^K \nu^{\star}\left(J^{\dagger}_k\right)
\exp\left\{- \int_L^{\infty} \nu^{\star}(z) dz\right\}\\
&\times\exp\left\{
-\int_0^L 
 \left(1-\prod_{j=1}^{d} M_z^f(-v_j)\right)\nu^{\star}(z)\,dz
\right\}
\end{align*}

\subsubsection*{Updating $c_{j,i}$}

The full conditional distribution of $c_{j,i}$ is a discrete distribution with a finite number of possible states proportional to
\[
m^{\dagger}_{j, c_{j,i}}\I\left(J^{\dagger}_{c_{j,i}}>u_{j,i}\right)
k\left(y_{j,i}\vert \theta_{c_{j,i}}\right),\qquad 1,\dots,n_j,\quad j=1,\dots,d.
\]

\section{Illustrations}

The clinical studies
CALGB 8881
\citep{CALGB8881}
 and CALGB 9160 \citep{CALGB9160}  looked at the response of patients to different anticancer drug therapies. The response was white blood cell count (WBC) and patients had between four and 25 measurements taken over the course of the trial. The data was previously analysed by \cite{mulros04} who fit a nonlinear random effects model for the patient's response over time. The model assumes that the mean response at time $t$ with parameters $\theta=(z_1,z_2,z_3,\tau_1,\tau_2,\beta_0,\beta_1)$ is given by
\[
f(\theta,t)=\left\{
\begin{array}{ll}
z_1 & t<\tau_1\\
rz_1+ (1 - r) g(\theta,\tau_2) & \tau_1\leq t<\tau_2\\
g(\theta,t) & t\geq \tau_2
\end{array}
\right.
\]
where $r=(\tau_2-t)/(\tau_2-\tau_1)$ and $g(\theta, t)=z_2 + z_3/[1+\exp\{\beta_0 - \beta_1(t-\tau_2)\}]$. There were nine different combinations of the anticancer agent CTX, the drug GM-CSF and amifostine (AMOF) which are summarized in Table~\ref{t:data}.

\begin{table}[h!]
\begin{center}
\begin{tabular}{rrrrrr}\hline
Group & CTX & GM-CSF & AMOF & Study & Number of patients\\\hline
1 &   1.5  & 10.0  & 0 & 1 & 6\\
2  &  3.0  & 5.0  & 0 & 2 & 28\\
3   & 3.0  & 5.0  &  1 & 2 & 18\\
 4  & 3.0  & 2.5  & 0 & 1 & 6\\
5 &  3.0 &  5.0  & 0 & 1 & 6\\
6 &  3.0 &  10.0  & 0 & 1 & 6\\
7 &    4.5 &  5.0  & 0 & 1 & 12\\
8 &   4.5 &   10.0  & 0 & 1 & 10\\
9 &   6.0 & 5.0 &  0 & 1  & 6\\\hline
\end{tabular}
\end{center}
\caption{The levels of CTX (g $\mbox{m}^{-2}$), GM-CSF ($\mu$g k$\mbox{g}^{-1}$) and AMOF across the nine groups. CALGB 8881 is indicated as Study 1 and CALGB 9160 as Study 2.}\label{t:data}
\end{table}

Summaries of the data  are available as part of the \verb+DPpackage+ in  R where a  non-linear regression model is fitted with $f(\theta_{j,i},t)$ as the mean for the $i$-th patient in the $j$-th group. We will consider the differences in the distribution of the estimated values $\hat\theta_{j,i}$'s across the nine studies. It is assumed that
\[
\hat\theta_{j,i}\sim\N(\mu_{j,i},\Sigma_{j,i}),\qquad (\mu_{j,j},\Sigma_{j,j})\sim \tilde{p}_j
\]
where  $\tilde{p}_1,\dots,\tilde{p}_9$ are given a NCoRM process prior with independent 
$\Ga(\phi,1)$-distributed scores and Dirichlet process marginals. The centring measure $\alpha$ is $\mbox{N}(\mu\vert \bar{\hat\theta},100 \Sigma)\mbox{IW}(\Sigma\vert 14,4/9\times\hat\Sigma)$ where $\bar{\hat\theta}$ and $\hat\Sigma$ are the sample mean and the sample covariance matrix of $\hat\theta$. This implies a prior mean of $1/9\times \hat\Sigma$. 
 The parameter $\phi$ is given an exponential prior with mean 1.

\begin{figure}[h!]
\begin{center}
\includegraphics[trim=20mm 0mm 0mm 210mm, clip]{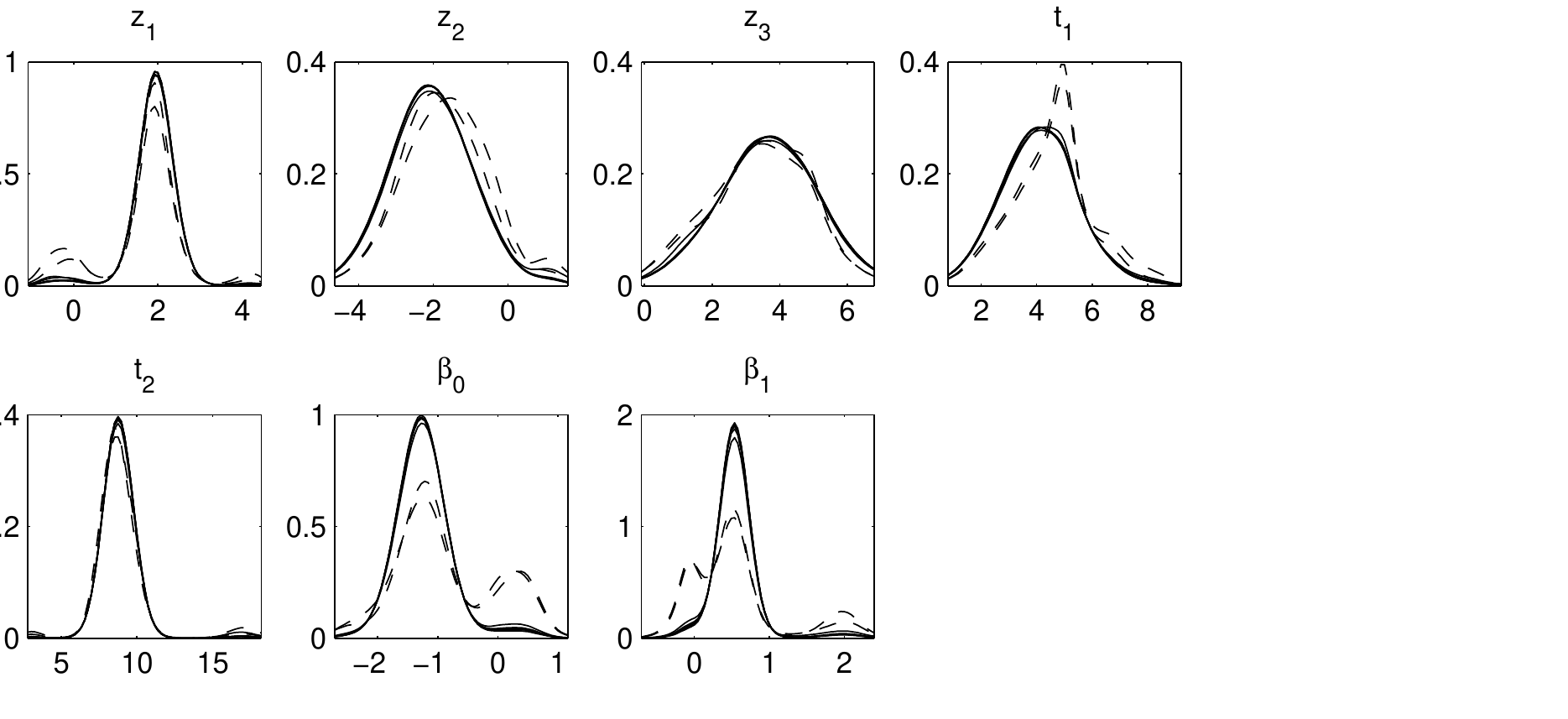}
\end{center}
\caption{The posterior mean marginal densities of each parameters in the CALGB example. The lines indicated a group in CALGB 8881 (solid line) and CALGB 9160 (dashed line).}\label{f:results}
\end{figure}
 The results of the analysis are illustrated  in Figure~\ref{f:results} which shows the posterior mean marginal density of each parameter. The results within each study are very similar with the main difference occurring between the two studies.
All densities are very similar for the parameters $z_1$, $z_2$, $z_3$ and $t_2$. There is a slight difference in the distribution for $t_1$ but much bigger differences for parameters $\beta_0$ and $\beta_1$. The results for CALGB 8881 are unimodal whereas CALGB9160 includes additional modes at 0.5 for $\beta_0$ and $-0.5$ and 2 for $\beta_1$. Figure~\ref{f:results2} shows the posterior mean joint density of $\beta_0$ and $\beta_1$ which shows a bimodal distribution for CALGB9160 with one mode at roughly $(-1.5, 0.5)$ (which is the mode for CALGB8881) and a second mode at roughly $(-0.5, 0)$. This suggests that CALGB9160 may contains two groups who responded differently.
The posterior median of $\phi$ was 1.03 with a 95\% highest posterior density region of $(0.46, 2.36)$.

\begin{figure}[h!]
\begin{center}
\includegraphics[trim=0mm 0mm 120mm 250mm, clip]{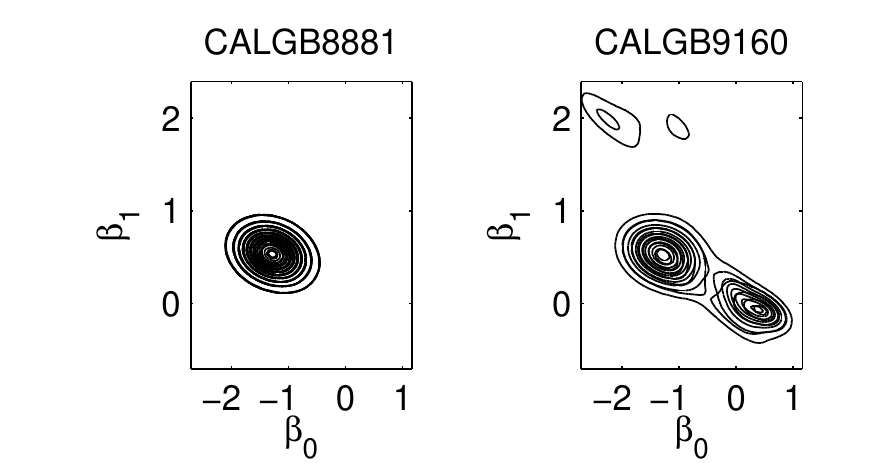}
\end{center}
\caption{The posterior mean joint densities of $\beta_0$ and $\beta_1$ in the CALGB example for the groups in CALGB 8881 and CALGB 9160.}\label{f:results2}
\end{figure}

\section{Discussion}

The modelling of dependent random measures has been an extremely active area of research for the past fifteen years beginning with the seminal work of MacEachern \citep{MacEachern}. Much of the work has concentrated on dependent random probability measures with several general approaches developed in the literature. Using the notation of (\ref{DDP}), initial work considered approaches where $w_i(x)=w_i$ and dependence is modelled through the atom location $\theta_i(x)$. This implies that cluster sizes will be similar for all values of $x$ and so leads to a specific form of dependence. Alternatively, 
many authors used $\theta_i(x)=\theta$ for all $x$ with dependence modelled through the weights; often using a 
 stick-breaking construction where $w_i(x) = V_i(x)\prod_{j<i}(1-V_j(x))$, see {\it e.g.} \citep{dun10} for a review. This usually leads to computationally tractable methods which either extend random truncation methods such as retrospective sampling 
\citep{Pap08}  or slice sampling \citep{Kalli11}, or develop truncation ideas for Dirichlet process mixtures 
\citep{IshJames}. However, stick-breaking approaches have some limitations for modelling. The construction implies a stochastic ordering so that $w_1(x)$ will tend to be the largest weight for all $x$. This can be inappropriate for some regression problems where we would like different component to have large weights for different values of $x$. The correlation is usually built on $V_j(x)$ and so $w_i(x)$ is a non-linear function of many correlated processes. This can lead to a dependence structure on $w_i(x)$ which is hard to interpret. Analytical results such as generalizations of the exchangeable partition probability function are usually impossible to derive for these priors. These methods can often be applied to problems where $\mathcal{X}$ is continuous or discrete. Other priors are restricted to a discrete $\mathcal{X}$.  One approach builds a hierarchy of nonparametric processes (see \cite{tehjord} for a review) leading from the seminal work of  \cite{tehjor06} on 
 hierarchical Dirichlet process (HDP).  For example, a two level hierarchical model could be constructed by assuming that the distributions for each group are conditionally independent draws from a nonparametric prior which is centred on a process which is itself given a nonparametric prior.  This leads to the same correlation {\it a priori} between the distribution for each value in $\mathcal{X}$ (although, more complicated hierarchical structures could be introduced to allow different correlation within subsets of $\mathcal{X}$).  Posterior simulation is usually implemented using the Chinese restaurant franchise algorithm. 
 
 The CoRM in its most general form is very flexible and allows both hierarchical and regression models. 
Normalized compound random measures includes many previously described priors which makes the links between these priors clearer. 
 This paper has concentrated on priors where the dimensions of the scores  are independent. 
 The tractability of these measures allows their properties to be derived and we concentrate on the class where the dimension of the scores are gamma distributed. 
 If the moment generating function of the marginal score distributions is available analytically, posterior computation for NCoRM mixture model can be carried out using an augmented P\'olya urn scheme or a slice sampler and several useful analytical expressions can be derived.  
 This restricts modelling to hierarchical type structures. More general, CoRM-type models where the scores are given by a regression are discussed by \cite{RanBlei15} who use a truncation of the infinite dimensional parameter and variational Bayes to make inference. In future work, we intend to extend both the P\'olya urn scheme and slice sampler to  regression models.

The compound random measure is defined using a completely random measure and a finite dimensional score distribution. For a given marginal process, the dependence between the distributions is controlled by the choice of finite dimensional score distribution. In this paper, we have concentrated on the case where the scores are independent and gamma distributed. 
This allows the dependence between the measures in different dimensions to be modelled by the shape parameter of the gamma distribution. In this case, we  show how compound random measures can be constructed with gamma, $\sigma$-stable and generalized gamma process marginals.  Importantly, the modelling of dependence between random measures can be achieved by the modelling of dependence between random variables and so greatly reduces the difficulty of  specifying a 
prior for a particular problem. Future work will consider studying these classes of compound random measures.

\section*{Acknowledgements}
Fabrizio Leisen was supported by the European Community's Seventh  Framework Programme [FP7/2007-2013] under grant agreement no: 630677 and Jim E. Griffin was supported by
 EPSRC Novel Technologies for Cross-disciplinary Research grant EP/I036575/I. The authors would like to acknowledge CALGB for the data used in the illustration.

\appendix

\section{Levy Copulas}

For the sake of illustration, we consider the 2-dimensional case.  
\begin{dfn}
A L\'evy copula is a function $C: [0,\infty]^2\rightarrow [0,\infty]$ such that
\begin{enumerate}
\item $C(y_1,0)=C(0,y_2)=0$ for any positive $y_1$ and $y_2$,
\item C has uniform margins, {\it i.e.} $C(y_1,\infty)=y_1$ and $C(\infty,y_2)=y_2$,
\item for all $y_1<z_1$ and $y_2<z_2$, $C(y_1,y_2)+C(z_1,z_2)-C(y_1,z_2)-C(y_2,z_1)\geq 0$.
\end{enumerate}
\end{dfn}
The definition in higher dimension is analogous (see \cite{ContTank}).
Let $U_i(x):=\int_x^\infty \nu_i(s)\,\ddr s$ be the $i$--th marginal tail integral associated with $\nu_i$. 
If both the copula $C$ and the marginal tail integrals are sufficiently smooth, then
  \[
  \rho_2(s_1,s_2)=\frac{\partial^2\, C(y_1, y_2)}{\partial y_1\partial
  y_2}\,\bigg|_{y_1=U_1(s_1), y_2=U_2(s_2)}\:
  \nu_1(s_1)\nu_2(s_2).
  \]

A wide range of dependence structures can be induced through L\'evy copulas. For example the independence case, {\it i.e.} $\int_{A\times B}\rho_2(s_1,s_2)\,\ddr s_1\,\ddr s_2=\int_A\nu_1(s_1)\,\ddr s_1+\int_B\nu_2(s_2)\,\ddr s_2$ for any $A$ and $B$ in $\Bcr(\R^+)$, corresponds to the L\'evy copula
\[
C_{\perp}(y_1,y_2)=y_1 \mathbb{I}_{\{\infty\}}(y_2)+y_2 \mathbb{I}_{\{\infty\}}(y_1).
\]
where $\mathbb{I}_A$ is the indicator function of the set $A$. 
On the other hand, the case of completely dependent CRMs  corresponds to
\[
C_{\parallel}(y_1,y_2)=\min \{y_1,y_2\}
\]
which yields a vector $(\tilde\mu_1,\tilde\mu_2)$ such that for any $x$ and $y$ in $\X$ either $\tilde\mu_i(\{x\})<\tilde\mu_i(\{y\})$ or $\tilde\mu_i(\{x\})>\tilde\mu_i(\{y\})$, for $i=1,2$, almost surely. Intermediate cases, between these two extremes, can be detected, for example, by relying on the {\it L\'evy-Clayton} copula defined by
\begin{equation}\label{LCcopula}
C_{\gamma}(y_1,y_2)=(y_1^{-\gamma}+y_2^{-\gamma})^{-\frac{1}{\gamma}}\qquad \gamma>0.
\end{equation}
with the parameter $\theta$ regulating the degree of dependence. It can be seen that
$\lim_{\gamma\rightarrow 0} C_{\gamma}= C_{\perp}$ and $\lim_{\gamma\rightarrow \infty} C_{\gamma}= C_{\parallel}.$

\section{Additional Results}

In the following theorem, the derivatives (up to a constant) of the Laplace exponent of a Compound random measure are provided. 
\begin{thm} Let
\begin{equation}
\label{grho}
g_{\rho}(q_1,\dots,q_d;\bm{\lambda})=\int_{(0,\infty)^d}s_1^{q_1}\cdots s_d^{q_d}e^{-\psi_{\rho,d}(\bm{\lambda})}\rho_d(s_1,\dots,s_d)d\bm{s}.
\end{equation}
Then,
$$g_{\rho}(q_1,\dots,q_d;\bm{\lambda})=(-1)^{q_1+\cdots+q_d}\int \nu^{\star}(z) \left(\prod_{j=1}^d \frac{\partial^{q_i}}{\partial \lambda_i^{q_i}}  M_z^f(-\lambda_i)\right)dz.$$
\end{thm}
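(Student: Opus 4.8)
The plan is to reduce the statement to Theorem~\ref{mgf_thm} by observing that $g_{\rho}(q_1,\dots,q_d;\bm{\lambda})$ is, up to the sign $(-1)^{q_1+\cdots+q_d}$, exactly the result of applying the differential operator $\prod_{j=1}^d \partial^{q_j}/\partial\lambda_j^{q_j}$ to the Laplace exponent $\psi_{\rho,d}(\bm{\lambda})$ (and then multiplying by $e^{-\psi_{\rho,d}(\bm{\lambda})}$, which does not appear on the right-hand side because we are differentiating $\psi$ and not $e^{-\psi}$). Concretely, I would start from the definition of $\psi_{\rho,d}$ in \eqref{eq:lapl_exp_star},
\[
\psi_{\rho,d}(\bm{\lambda})=\int_{(\R^+)^d}\left[1-\edr^{-\langle\bm{\lambda},\bm{s}\rangle}\right]\rho_d(ds_1,\dots,ds_d),
\]
and differentiate under the integral sign: since $\partial^{q_j}/\partial\lambda_j^{q_j}\,\edr^{-\langle\bm{\lambda},\bm{s}\rangle}=(-s_j)^{q_j}\edr^{-\langle\bm{\lambda},\bm{s}\rangle}$, applying the full product of operators to the constant $1$ gives $0$ (assuming $q_1+\cdots+q_d\ge 1$; otherwise the statement is just Theorem~\ref{mgf_thm} directly) and to the exponential term gives $(-1)^{q_1+\cdots+q_d}s_1^{q_1}\cdots s_d^{q_d}\edr^{-\langle\bm{\lambda},\bm{s}\rangle}$, so that
\[
\left(\prod_{j=1}^d \frac{\partial^{q_j}}{\partial\lambda_j^{q_j}}\right)\psi_{\rho,d}(\bm{\lambda})=(-1)^{q_1+\cdots+q_d}\int_{(\R^+)^d}s_1^{q_1}\cdots s_d^{q_d}\,\edr^{-\langle\bm{\lambda},\bm{s}\rangle}\rho_d(ds_1,\dots,ds_d).
\]
Comparing with \eqref{grho}, the factor $e^{-\psi_{\rho,d}(\bm{\lambda})}$ in the integrand of $g_\rho$ is constant with respect to $\bm{s}$, so $g_{\rho}(q_1,\dots,q_d;\bm{\lambda})=e^{-\psi_{\rho,d}(\bm{\lambda})}\cdot(-1)^{q_1+\cdots+q_d}\left(\prod_j\partial^{q_j}/\partial\lambda_j^{q_j}\right)\psi_{\rho,d}(\bm{\lambda})$; however, re-reading \eqref{grho} I see the exponential is inside the integral but independent of the integration variable, so it simply pulls out — and in fact, matching the claimed formula, it must be that the intended reading is that $g_\rho$ equals the derivative of $\psi$ with the sign, which is what the displayed conclusion asserts once one also uses Theorem~\ref{mgf_thm}.

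The second half is then to rewrite the integral over $(\R^+)^d$ against $\rho_d$ using the compound structure \eqref{compound}, namely $\rho_d(ds_1,\dots,ds_d)=\int h(s_1,\dots,s_d\vert z)\,ds_1\cdots ds_d\,\nu^{\star}(dz)$ with $h(s_1,\dots,s_d\vert z)=z^{-d}\prod_{j=1}^d f(s_j/z)$. Substituting and using Fubini (justified by the integrability condition in the definition of a CoRM, together with the assumed existence of $M_z^f$ and its derivatives), the integral factorizes over the $d$ coordinates:
\[
\int_{(\R^+)^d}s_1^{q_1}\cdots s_d^{q_d}\,\edr^{-\langle\bm{\lambda},\bm{s}\rangle}\rho_d(d\bm{s})=\int\nu^{\star}(z)\prod_{j=1}^d\left(\int_{\R^+}s_j^{q_j}\edr^{-\lambda_j s_j}z^{-1}f(s_j/z)\,ds_j\right)dz.
\]
Finally, I would identify each inner integral as a derivative of the moment generating function: since $M_z^f(t)=\int e^{ts}z^{-1}f(s/z)\,ds$, differentiating $q_j$ times in $t$ brings down $s^{q_j}$, so $\int s_j^{q_j}\edr^{-\lambda_j s_j}z^{-1}f(s_j/z)\,ds_j=\frac{\partial^{q_j}}{\partial t^{q_j}}M_z^f(t)\big|_{t=-\lambda_j}$, which is exactly $\frac{\partial^{q_j}}{\partial\lambda_j^{q_j}}M_z^f(-\lambda_j)$ up to a sign $(-1)^{q_j}$; collecting the signs over $j$ cancels the leading $(-1)^{q_1+\cdots+q_d}$ in the obvious way, leaving the stated formula. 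Putting the two halves together yields the claim.

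The main obstacle I anticipate is purely technical rather than conceptual: justifying the interchange of differentiation and integration (differentiating under the integral sign in $\bm{\lambda}$) and the use of Fubini's theorem. This requires a dominated-convergence argument, and the natural dominating functions involve $s_j^{q_j}\edr^{-\lambda_j s_j}$ near infinity and the small-$s$ behaviour of $\rho_d$ near the origin; the hypothesis that $M_z^f(t)$ exists (on a neighbourhood of $0$, hence for $t=-\lambda_j<0$) together with the CoRM integrability condition $\int\int\min(1,\|\bm{s}\|)h(\bm{s}\vert z)\,d\bm{s}\,\nu^{\star}(dz)<\infty$ should supply exactly what is needed, but one must be slightly careful that the polynomial weights $s_j^{q_j}$ do not destroy integrability near zero (they help, not hurt) nor near infinity (controlled by the exponential decay coming from $\lambda_j>0$ and existence of the mgf). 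I would state these justifications briefly and refer to standard results, since the algebraic identity is the real content and follows immediately once the interchanges are licensed.
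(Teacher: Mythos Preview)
Your proposal is correct and the core computation --- your ``second half,'' substituting the CoRM structure \eqref{compound}, using Fubini to factorize over the $d$ coordinates, and identifying each factor $\int s_j^{q_j}e^{-\lambda_j s_j}z^{-1}f(s_j/z)\,ds_j$ as $(-1)^{q_j}\frac{\partial^{q_j}}{\partial\lambda_j^{q_j}}M_z^f(-\lambda_j)$ --- is exactly the paper's four-line proof. Your ``first half'' detour through differentiating $\psi_{\rho,d}$ is unnecessary and carries a sign slip (you dropped the minus from $1-e^{-\langle\bm\lambda,\bm s\rangle}$); the confusion stems from a typo in \eqref{grho}, where the exponent should read $e^{-\lambda_1 s_1-\cdots-\lambda_d s_d}$ rather than $e^{-\psi_{\rho,d}(\bm\lambda)}$, as the paper's own proof makes clear in its very first line.
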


\begin{proof}
\begin{equation*}
\begin{split}
g_{\rho}(q_1,\dots,q_d;\bm{\lambda})&=\int_{(0,\infty)^d}s_1^{q_1}\cdots s_d^{q_d}e^{-\lambda_1 s_1-\cdots-\lambda_d s_d}\rho_d(s_1,\dots,s_d)ds_1\cdots ds_d\\
&=\int \nu^{\star}(z) \left(\prod_{j=1}^d \int s_j^{q_1}e^{-\lambda_j s_j}z^{-1}f(s_j/z)ds_j\right)dz\\
&=\int \nu^{\star}(z) \left(\prod_{j=1}^d (-1)^{q_j} \frac{\partial^{q_j}}{\partial \lambda_j^{q_j}} \left(\int  e^{-\lambda_j s_j} z^{-1}f(s_j/z)ds_j\right)\right)dz\\
&=(-1)^{q_1+\cdots+q_d}\int \nu^{\star}(z) \left(\prod_{j=1}^d \frac{\partial^{q_j}}{\partial \lambda_j^{q_j}}  M_z^f(-\lambda_j)\right)dz\\
\end{split}
\end{equation*}
\end{proof}

\section{Proofs}\label{app}

\subsubsection*{Proof of Theorem 3.1}

%

\begin{equation*}
\begin{split}
\psi_{\rho,d}(\lambda_1,\dots,\lambda_d)&=\int_{[0,+\infty]^d} (1-e^{-\lambda_1 s_1-\cdots-\lambda_d s_d})\rho_d(s_1,\dots,s_d)ds_1\dots ds_d\\
&=\int_{[0,+\infty]^d} (1-e^{-\lambda_1 s_1-\cdots-\lambda_d s_d})\int z^{-d}\prod_{j=1}^d f(s_j/z)\nu^{\star}(z)\,dzds_1\dots ds_d\\
&=\int z^{-d}\left(\int_{[0,+\infty]^d}(1-e^{-\lambda_1 s_1-\cdots-\lambda_d s_d})\prod_{j=1}^d f(s_j/z)ds_1\dots ds_d\right)\nu^{\star}(z)\,dz\\
&=\int z^{-d}\left(\int_{[0,+\infty]^d}\left(\prod_{j=1}^d f(s_j/z)-e^{-\lambda_1 s_1-\cdots-\lambda_d s_d}\prod_{j=1}^d f(s_j/z)\right)ds_1\dots ds_d\right)\nu^{\star}(z)\,dz\\
&=\int z^{-d}\left(z^d-\int_{[0,+\infty]^d}\prod_{j=1}^d e^{-\lambda_j s_j}f(s_j/z)ds_1\dots ds_d\right)\nu^{\star}(z)\,dz\\
&=\int z^{-d}\left(z^d-\prod_{j=1}^d\int_{[0,+\infty]} e^{-\lambda_j s_j}f(s_j/z)ds_j\right)\nu^{\star}(z)\,dz\\
&=\int z^{-d}\left(z^d-\prod_{j=1}^d z M_z^f(-\lambda_j)\right)\nu^{\star}(z)\,dz\\
&=\int \left(1-\prod_{j=1}^{d} M_z^f(-\lambda_j)\right)\nu^{\star}(z)\,dz
\end{split}
\end{equation*}

\subsubsection*{Proof of Theorem 4.1} \textbf{Gamma marginals}.
 From 3.383.4 of \cite{Grad} follows the thesis. Indeed,  
\begin{equation*}
\begin{split}
\rho_d(s_1,\dots,s_d)&=\int \nu^{\star}(z) z^{-d}\prod_{j=1}^d  f(s_j/z)dz\\
&=\int_0^1 z^{-1}(1-z)^{\phi-1}z^{-d}\frac{\left(\prod_{j=1}^d s_j\right)^{\phi-1}}{[\Gamma(\phi)]^d}z^{-d\phi+d}e^{-\frac{\bm{|s|}}{z}}dz\\
&=\frac{\left(\prod_{j=1}^d s_j\right)^{\phi-1}}{[\Gamma(\phi)]^d}\int_0^1 z^{-d\phi-1}(1-z)^{\phi-1}e^{-\frac{|\bm{s}|}{z}}dz\\
&=\frac{\left(\prod_{j=1}^d s_j\right)^{\phi-1}}{[\Gamma(\phi)]^d}\int_1^{+\infty} t^{(d-1)\phi+1}\left(t-1\right)^{\phi-1}e^{-|s|t}dt\\
&=\frac{\left(\prod_{j=1}^d s_j\right)^{\phi-1}}{[\Gamma(\phi)]^{d-1}}|\bm{s}|^{-\frac{d\phi+1}{2}}e^{-\frac{|\bm{s}|}{2}}W_{\frac{(d-2)\phi+1}{2},-\frac{d\phi}{2}}(|\bm{s}|)
\end{split}
\end{equation*}
\textbf{$\sigma$-stable marginals.} In this case,

\begin{equation*}
\begin{split}
\rho_d(s_1,\dots,s_d)&=\int_0^{+\infty}z^{-\sigma-1}\frac{\Gamma(\phi)}{\Gamma(\sigma)\Gamma(1-\sigma)} z^{-d}\frac{\left(\prod_{j=1}^d s_j\right)^{\phi-1}}{[\Gamma(\phi)]^d}z^{-d\phi+d}e^{-\frac{\bm{|s|}}{z}}dz\\
&=\frac{1}{\Gamma(\sigma)\Gamma(1-\sigma)}\frac{\left(\prod_{j=1}^d s_i\right)^{\phi-1}}{[\Gamma(\phi)]^{d-1}}\int_0^{+\infty} z^{-d\phi-\sigma-1}e^{-\frac{\bm{|s|}}{z}}dz\\
&=\frac{\Gamma(\sigma+d\phi)}{\Gamma(\sigma)\Gamma(1-\sigma)}\frac{\left(\prod_{j=1}^d s_j\right)^{\phi-1}}{[\Gamma(\phi)]^{d-1}}\bm{|s|}^{-\sigma-d\phi}
\end{split}
\end{equation*}

\subsubsection*{Proof of Corollary 4.1} \textbf{Gamma marginals}.
First of all, note that the Whittaker function could be expressed in terms of a Kummer confluent hypergeometric function, 
$$W_{\frac{(d-2)\phi+1}{2},-\frac{d\phi}{2}}(|\bm{s}|)=e^{-\frac{|\bm{s}|}{2}}|\bm{s}|^{-\frac{d\phi-1}{2}}U(-(d-1)\phi,-d\phi+1,|\bm{s}|)$$
and thus
$$\rho_d(s_1,\dots,s_d)=\frac{\left(\prod_{i=1}^d s_i\right)^{\phi-1}}{[\Gamma(\phi)]^{d-1}}|\bm{s}|^{-d\phi}e^{-|\bm{s}|}U(-(d-1)\phi,-d\phi+1,|\bm{s}|)$$
In the special case of $\phi=1$, we get the following multivariate L\'evy intensity
$$\rho_d(s_1,\dots,s_d)=|\bm{s}|^{-d}e^{-|\bm{s}|}U(-d+1,-d+1,|\bm{s}|)$$
and from 13.2.7 of \cite{NIST}
\begin{equation*}
\begin{split}
\rho_d(s_1,\dots,s_d)&=|\bm{s}|^{-d}e^{-|\bm{s}|}(-1)^{d-1}\sum_{j=0}^{d-1}\binom{d-1}{j} (-1)^{j} (-d+1+j)_{d-1-j} |\bm{s}|^j \\
&=|\bm{s}|^{-d}e^{-|\bm{s}|}\sum_{j=0}^{d-1}\binom{d-1}{j} j!|\bm{s}|^{d-1-j} \\
&=\sum_{j=0}^{d-1}\frac{(d-1)!}{(d-1-j)!}|\bm{s}|^{-j-1}e^{-|\bm{s}|}\\
\end{split}
\end{equation*}
\textbf{$\sigma$-stable marginals}. The second part of the proof is straightforward and doesn't require additional algebra.

\subsubsection*{Proof of Theorem 4.2}
From Equation \eqref{MGM} it follows
$$\psi_{\rho,d}(\bm{\tilde{\lambda}},\bm{n})=\int \left(1-\prod_{i=1}^{l} \frac{1}{(1+z\tilde{\lambda}_i)^{n_i\phi}}\right)\nu^{\star}(z)dz$$
since $M_z^f(-\tilde{\lambda}_i)=\frac{1}{(1+z\tilde{\lambda}_i)^{\phi}}$ under the hypothesis of independent Gamma distributed scores. The conclusion follows by noting that $\frac{\partial^{(n_i-1)\phi}}{\partial^{(n_i-1)\phi}\tilde{\lambda}_i}(\tilde{\lambda}_i^{n_i\phi-1})=\frac{\Gamma(n_i\phi)}{\Gamma(\phi)}\tilde{\lambda}_i^{\phi-1}$ and 
$$\frac{\partial^{(n_i-1)\phi}}{\partial^{(n_i-1)\phi}\tilde{\lambda}_i}\left(\frac{\tilde{\lambda}_i^{n_i\phi-1}}{(1+z\tilde{\lambda}_i)^{\phi}}\right)=\frac{\Gamma(n_i\phi)}{\Gamma(\phi)}\frac{\tilde{\lambda}_i^{\phi-1}}{(1+z\tilde{\lambda}_i)^{n_i\phi}}$$

The last equality follows from a simple application of the Leibniz's formula, indeed

\begin{equation*}
\begin{split}
\frac{\partial^{(n_i-1)\phi}}{\partial^{(n_i-1)\phi}\tilde{\lambda}_i}\left(\frac{\tilde{\lambda}_i^{n_i\phi-1}}{(1+z\tilde{\lambda}_i)^{\phi}}\right)&=\sum_{j=0}^{(n_i-1)\phi}\binom{(n_i-1)\phi}{j} \frac{(j+\phi-1)!}{(\phi-1)!}\frac{ (-1)^jz^j }{(1+z\tilde{\lambda}_i)^{j+\phi}}\frac{(n_i\phi-1)!}{(j+\phi-1)!}\tilde{\lambda}_i^{j+\phi-1}\\
&=\frac{\Gamma(n_i\phi)}{\Gamma(\phi)}\frac{\tilde{\lambda}_i^{\phi-1}}{(1+z\tilde{\lambda}_i)^{\phi}}\sum_{j=0}^{(n_i-1)\phi}\binom{(n_i-1)\phi}{j}\left(\frac{-z\tilde{\lambda}_i }{1+z\tilde{\lambda}_i}\right)^j\\
&=\frac{\Gamma(n_i\phi)}{\Gamma(\phi)}\frac{\tilde{\lambda}_i^{\phi-1}}{(1+z\tilde{\lambda}_i)^{\phi}}\left(\frac{-z\tilde{\lambda}_i }{1+z\tilde{\lambda}_i}+1\right)^{(n_i-1)\phi}\\
&=\frac{\Gamma(n_i\phi)}{\Gamma(\phi)}\frac{\tilde{\lambda}_i^{\phi-1}}{(1+z\tilde{\lambda}_i)^{n_i\phi}}\\
\end{split}
\end{equation*}
Thus, 
\begin{equation*}
\begin{split}
\psi_{\rho,d}(\bm{\tilde{\lambda}},\bm{n})&=\int \left(1-\prod_{i=1}^{l} \frac{1}{(1+z\tilde{\lambda}_i)^{n_i\phi}}\right)\nu^{\star}(z)dz\\
&=\frac{1}{\prod_{i=1}^l \tilde{\lambda}_i^{\phi-1}}\int \left(\prod_{i=1}^l \tilde{\lambda}_i^{\phi-1}-\prod_{i=1}^{l} \frac{\tilde{\lambda}_i^{\phi-1}}{(1+z\tilde{\lambda}_i)^{n_i\phi}}\right)\nu^{\star}(z)dz\\
&=\frac{[\Gamma(\phi)]^l}{\prod_{i=1}^l [\tilde{\lambda}_i^{\phi-1}\Gamma(n_i\phi)]}\left(\prod_{i=1}^l\frac{\partial^{(n_i-1)\phi}}{\partial^{(n_i-1)\phi}\tilde{\lambda}_i}\right)\int \left(\prod_{i=1}^l \tilde{\lambda}_i^{n_i\phi-1}-\prod_{i=1}^{l} \frac{\tilde{\lambda}_i^{n_i\phi-1}}{(1+z\tilde{\lambda}_i)^{\phi}}\right)\nu^{\star}(z)dz\\
&=\frac{[\Gamma(\phi)]^l}{\prod_{i=1}^l [\tilde{\lambda}_i^{\phi-1}\Gamma(n_i\phi)]}\left(\prod_{i=1}^l\frac{\partial^{(n_i-1)\phi}}{\partial^{(n_i-1)\phi}\tilde{\lambda}_i}\right)\left(\Upsilon^{\phi}_l(\bm{\tilde{\lambda}})\prod_{i=1}^l \tilde{\lambda}_i^{n_i\phi-1}\right)\\
\end{split}
\end{equation*}

\subsubsection*{Proof of Theorem 4.3}
Let
$$B_{\phi,l}=\{\bm{k}^*\in\{0,1,\ldots,\phi\}^l:\: |\bm{k}^*|=\phi\}\qquad \phi\geq j.$$
First of all, note that
$$\sum_{i=1}^l a_i(\bm{\tilde{\lambda}})=1$$
and 
$$\prod_{i=1}^{l} \frac{1}{(1+z\tilde{\lambda}_i)}=\sum_{i=1}^l \frac{a_i(\bm{\tilde{\lambda}})}{(1+z\tilde{\lambda}_i)}$$
Thus,
\begin{equation*}
\begin{split}
\Upsilon_l^{\phi}(\bm{\tilde{\lambda}})&=\int \left[\left(\sum_{i=1}^l a_i(\bm{\tilde{\lambda}})\right)^{\phi}-\left(\sum_{i=1}^l \frac{a_i(\bm{\tilde{\lambda}})}{(1+z\tilde{\lambda}_i)}\right)^{\phi}\right]\nu^{\star}(z)dz\\
&=\sum_{\bm{k}^*\in B_{\phi,l}} \binom{\phi}{k_1^*,\cdots,k_l^*}a_1^{k_1^*}(\bm{\tilde{\lambda}})\cdots a_l^{k_l^*}(\bm{\tilde{\lambda}})I(k_1^*,\dots,k_l^*;\bm{\tilde{\lambda}})\\
\end{split}
\end{equation*}
where 
$$I(k_1^*,\dots,k_l^*;\bm{\tilde{\lambda}})=\int \left(1-\prod_{i=1}^{l} \frac{1}{(1+z\tilde{\lambda}_i)^{k_i^*}}\right)\nu^{\star}(z)dz$$
Since some of the $k^*$'s could be zero then some terms could disappear in the expression above. For this reason, it's more convenient to write $\Upsilon_l^{\phi}(\bm{\tilde{\lambda}})$ as a sum over the set $A_{\phi,j}$ instead of a sum over $B_{\phi,l}$. Thus,
\begin{equation*}
\begin{split}
\Upsilon_l^{\phi}(\bm{\tilde{\lambda}})&=\sum_{j=1}^l \sum_{\bm{k}\in A_{\phi,j}} \binom{\phi}{k_1,\cdots,k_j}\sum_{0<i_1<i_2<\cdots<i_j\leq l} a_{i_1}^{k_1}(\bm{\tilde{\lambda}})\cdots a_{i_j}^{k_j}(\bm{\tilde{\lambda}})C(i_1,\dots,i_j;\bm{k};\bm{\tilde{\lambda}})\\
\end{split}
\end{equation*}
where 
$$C(i_1,\dots,i_j;\bm{k};\bm{\tilde{\lambda}})=\int \left(1-\prod_{h=1}^{j} \frac{1}{(1+z\tilde{\lambda}_{i_h})^{k_h}}\right)\nu^{\star}(z)dz$$
If $j>\phi$ then $A_{\phi,j}$ is the empty set. Thus we can resort the above sum as 
\begin{equation*}
\begin{split}
\Upsilon_l^{\phi}(\bm{\tilde{\lambda}})&=\sum_{j=1}^{\phi} \sum_{\bm{k}\in A_{\phi,j}} \binom{\phi}{k_1,\cdots,k_j}\sum_{0<i_1<i_2<\cdots<i_j\leq l} a_{i_1}^{k_1}(\bm{\tilde{\lambda}})\cdots a_{i_j}^{k_j}(\bm{\tilde{\lambda}})C(i_1,\dots,i_j;\bm{k};\bm{\tilde{\lambda}})\\
\end{split}
\end{equation*}
Let $|\bm{y}|_{\bm{i}}^j=y_{i_1}+\cdots+y_{i_j}$. Note that
\begin{equation*}
\begin{split}
\int \nu^{\star}(z)dz&=\int \int_{[0,+\infty]^j} \prod_{h=1}^j \frac{z^{-k_h}}{\Gamma(k_h)}y_{i_h}^{k_h-1}e^{-\frac{y_{i_h}}{z}} d\bm{y}\nu^{\star}(z)dz\\
&=\int_{[0,+\infty]^j}\frac{\Gamma(\phi)}{\left(|\bm{y}|_{\bm{i}}^j\right)^{\phi-1}}\left(\prod_{h=1}^j \frac{y_{i_h}^{k_h-1}}{\Gamma(k_h)}\right) \int \frac{\left(|\bm{y}|_{\bm{i}}^j\right)^{\phi-1}}{z^{\phi}\Gamma(\phi)}e^{-\frac{|\bm{y}|_{\bm{i}}^j}{z}} \nu^{\star}(z)dzd\bm{y}\\
&=\int_{[0,+\infty]^j}\frac{\Gamma(\phi)}{\left(|\bm{y}|_{\bm{i}}^j\right)^{\phi-1}}\left(\prod_{h=1}^j \frac{y_{i_h}^{k_h-1}}{\Gamma(k_h)}\right)\nu(|\bm{y}|_{\bm{i}}^j)d\bm{y}\\\\
\end{split}
\end{equation*}
In a similar fashion,
\begin{equation*}
\begin{split}
\int \left(\prod_{h=1}^{j} \frac{1}{(1+z\tilde{\lambda}_{i_h})^{k_h}}\right)\nu^{\star}(z)dz&=\int_{[0,+\infty]^j}\frac{\Gamma(\phi)e^{-\sum_{h=1}^j \tilde{\lambda}_{i_h}y_{i_h}}}{\left(|\bm{y}|_{\bm{i}}^j\right)^{\phi-1}}\left(\prod_{h=1}^j \frac{y_{i_h}^{k_h-1}}{\Gamma(k_h)}\right)\nu(|\bm{y}|_{\bm{i}}^j)d\bm{y}\\\\
\end{split}
\end{equation*}
and thus,
$$C(i_1,\dots,i_j;\bm{k};\bm{\tilde{\lambda}})=\int_{[0,+\infty]^j} \left(1-e^{-\sum_{h=1}^j \tilde{\lambda}_{i_h}y_{i_h}}\right)\frac{\Gamma(\phi)}{\left(|\bm{y}|_{\bm{i}}^j\right)^{\phi-1}}\left(\prod_{h=1}^j \frac{y_{i_h}^{k_h-1}}{\Gamma(k_h)}\right)\nu(|\bm{y}|_{\bm{i}}^j)d\bm{y}$$
The change of variables $\rho=|\bm{y}|_{\bm{i}}^j$ and $y_{i_h}=\rho z_h$, $h=1,\dots,j-1$ leads to 
\begin{equation*}
\begin{split}
C(i_1,\dots,i_j;\bm{k};\bm{\tilde{\lambda}})&=\int_{\Delta_{j-1}} \left(\Gamma(\phi)\prod_{h=1}^j \frac{z_h^{k_h-1}}{\Gamma(k_h)}\right)\int_0^{+\infty}\left(1-e^{-\rho\Lambda(\bm{\tilde{\lambda}},\bm{z})}\right)\nu(\rho)d\rho d\bm{z}\\
&=\int_{\Delta_{j-1}} \Gamma(\phi)\left((1-\sum_{h=1}^{j-1}z_h)^{k_j}\prod_{h=1}^{j-1} \frac{z_h^{k_h-1}}{\Gamma(k_h)}\right)\psi\left(\Lambda(\bm{\tilde{\lambda}},\bm{z})\right)d\bm{z}
\end{split}
\end{equation*}

\subsubsection*{Proof of Theorem 4.4}

Let $U(x)=\int _x^{+\infty} \nu(x)dx$ be the tail integral of the marginal L\'evy intensity and let 
$$U(x_1,\dots,x_d)=\int_{x_1}\dots\int_{x_d}\nu(s_1,\dots,s_d)ds_1\cdots ds_d.$$
From Theorem 5.3 in Cont and Tankov, exists only one copula C such that 
$$U(x_1,\dots,x_d)=C(U(x_1),\dots,U(x_d)).$$
It's easy to see that 
$$U(x_1,\dots,x_d)=\int \nu^{\star}(z)\prod_{j=1}^d (1-F(z^{-1}x_j))dz$$
and this proves the thesis.

\subsubsection*{Proof of Theorem 4.5}
In a similar fashion of \cite{LLS}, it's easy to see that
$$
\E\left[\prod_{i=1}^d\{\tilde\mu_i(A)\}^{q_i}\right]=e^{-\alpha(A)\psi(0,...,0)}\,q_1!\cdots q_d!\: \sum_{k=1}^{\mid \bm{q}\mid}[\alpha(A)]^k\:\times\:$$
$$\hspace{3cm}\times  \sum_{j=1}^{\mid \bm{q}\mid} \:\sum_{p_j(\bm{q},k)}\:\prod_{i=1}^j \frac{1}{\eta_i!(s_{1,i}!\cdots s_{d,i}!)^{\eta_i}}\left(g_{\nu}(s_{1,i},\dots,s_{d,i},0,\dots,0)\right)^{\eta_i}
$$
In the gamma case, it's easy to see that $\psi(0,...,0)=0$ and  
$$g_{\nu}(s_{1,i},\dots,s_{d,i},0,\dots,0)=\int \nu^{\star}(z) \prod_{j=1}^d (\phi)_{s_{j,i}}z^{s_{j,i}}dz$$
and this concludes the proof.

\section{Additional Details of Computational Methods}\label{app2}

In the update of $u$ in the Gibbs sampler, it is necessary to sample from the density proportional to
 simulated from the density of $z$ proportional to
\[
\nu^{\star}(z)\prod_{j=1}^d \int \exp\left\{- v_j m_j z\right\}\, f(m_j)\,dm_j
,\qquad \kappa^{\star}<z<\kappa.
\]
If a NCoRM with a $\mbox{Ga}(\phi, 1)$ score distribution and Dirichlet process marginals is used, this density is proportional to
\[
z^{-1}(1-z)^{\phi-1}\prod_{j=1}^d (1 + v_i z)^{-\phi},\qquad \kappa^{\star}<z<\kappa.
\]
A rejection sampler is used with rejection envelope proportional to $z^{-1}(1-z)^{\phi-1},  \ \kappa^{\star}<z<\kappa$. The acceptance probability is
\[
\prod_{j=1}^d \left(\frac{1 + v_j \kappa^{\star}}{1 + v_j z}\right)^{\phi}.
\]
This rejection envelope is non-standard and can be sampled using a rejection sampler with the 
envelope
\[
g(z) =\left\{ \begin{array}{rll}
(1-z)^{\phi-1},  & \kappa^{\star}<z<\kappa, & \mbox{if }\phi<1\\
z^{-1}, &  \kappa^{\star}<z<\kappa, & \mbox{if }\phi>1
\end{array}\right..
\]
If a NCoRM with a $\mbox{Ga}(\phi, 1)$ score distribution and normalized generalized gamma process marginals with $a=1$ is used, this density is proportional to
\[
z^{-1-\sigma}(1-z)^{\sigma+\phi-1}\prod_{j=1}^d (1 + v_i z)^{-\phi},\qquad \kappa^{\star}<z<\kappa.
\]
A rejection sampler is used with rejection envelope proportional to $z^{-1-\sigma}(1-z)^{\sigma+\phi-1},  \ \kappa^{\star}<z<\kappa$. The acceptance probability is
\[
\prod_{j=1}^d \left(\frac{1 + v_j \kappa^{\star}}{1 + v_j z}\right)^{\phi}.
\]
This rejection envelope is non-standard and can be sampled using a rejection sampler with the 
envelope
\[
g(z) =\left\{ \begin{array}{rll}
(1-z)^{\sigma+\phi-1}, & \kappa^{\star}<z<\kappa, & \mbox{if }\sigma+\phi<1\\
z^{-1-\sigma}, & \kappa^{\star}<z<\kappa, & \mbox{if }\sigma+\phi>1
\end{array}\right..
\]

\end{document}